\def \bs {\boldsymbol}
\def \Pr {\mathbb{P}}
\def \I {\mathrm{I}}
\def \R {\mathbb{R}}
\def \e {\mathbf{e}}
\def \X {\bs{X}}
\def \y {\bs y}
\def \x {\bs x}
\def \hx {\hat{\x}}
\def \z {\bs z}
\def \w {\bs w}
\def \h {\bs h}
\def \S {\mathcal{S}}
\def \u {\bs u}
\def \zero {\bs 0}
\def \E {\mathbb E}
\def \df {\stackrel{\mathrm{def}}{=}}
\def \supp {\mathrm{supp}}
\def \S {\mathcal{S}}
\def \bone {\bs 1}
\def \sup {\mathrm{sup}}
\def \nn {\nonumber}
\def \ker {\mathrm{Ker}}
\def \sI {\mathcal{I}}
\newtheorem{defi}{Definition}
\title{Verifiable and computable performance analysis of  sparsity recovery\thanks{This work was supported by ONR Grant N000140810849, and NSF Grants CCF-1014908 and CCF-0963742. A preliminary version of this work appeared in \emph{Verifiable and computable $\ \ell_\infty$ performance evaluation ofÊ$\ \ell_1$ sparse signal recovery}, in Proceedings of the 45th Annual Conference on Information Sciences and Systems, 2011.}}
\author{Gongguo Tang\thanks{Preston M. Green Department of Electrical and Systems Engineering, Washington University in St. Louis, St. Louis, MO 63130-1127, ({\tt gt2@ese.wustl.edu}).}
        \and Arye Nehorai\thanks{Preston M. Green Department of Electrical and Systems Engineering, Washington University in St. Louis, St. Louis, MO 63130-1127, ({\tt nehorai@ese.wustl.edu}).}}
\begin{document}

\maketitle

\begin{abstract}
In this paper, we develop verifiable and computable performance analysis of sparsity recovery. We define a family of goodness measures for arbitrary sensing matrices as a set of optimization problems, and design algorithms with a theoretical global convergence guarantee to compute these goodness measures. The proposed algorithms solve a series of second-order cone programs, or linear programs. As a by-product, we implement an efficient algorithm to verify a sufficient condition for exact sparsity recovery in the noise-free case. We derive performance bounds on the recovery errors in terms of these goodness measures. We also analytically demonstrate that the developed goodness measures are non-degenerate for a large class of random sensing matrices, as long as the number of measurements is relatively large. Numerical experiments show that, compared with the restricted isometry based performance bounds, our error bounds apply to a wider range of problems and are tighter, when the sparsity levels of the signals are relatively low.
\end{abstract}

\begin{keywords} 
 compressive sensing, computable performance analysis, fixed point theory, linear programming, second-order cone programming, sparsity recovery
\end{keywords}

\begin{AMS}
47H10, 90C05, 90C25, 90C26, 90C90, 94A12
\end{AMS}

\pagestyle{myheadings}
\thispagestyle{plain}
\markboth{GONGGUO TANG AND ARYE NEHORAI}{PERFORMANCE ANALYSIS OF SPARSITY RECOVERY}

\section{Introduction}
\label{sec:intro}
\noindent

Sparse signal recovery (or compressive sensing) has revolutionized the way we think of signal sampling \cite{Candes2008IntroCS}. It goes far beyond sampling and has also been applied to areas as diverse as medical imaging, remote sensing, radar, sensor arrays, image processing, computer vision, and so on. Mathematically, sparse signal recovery aims to reconstruct a sparse signal, namely a signal with only a few non-zero components, from usually noisy linear measurements:
\begin{eqnarray}
  \y &=& A\x + \w,
\end{eqnarray}
where $\x \in \R^n$ is the sparse signal, $\y \in \R^m$ is the measurement vector, $A \in \R^{m\times n}$ is the sensing/measurement matrix, and $\w \in \R^m$ is the noise. A theoretically justified way to exploit the sparseness in recovering $\x$ is to minimize its $\ell_1$ norm under certain constraints \cite{Candes2006Uncertainty}.

In this paper, we investigate the problem of using the $\ell_\infty$ norm as a performance criterion for sparse signal recovery via $\ell_1$ minimization. Although the $\ell_2$ norm has been used as the performance criterion by the majority of published research in sparse signal recovery, the adoption of the $\ell_\infty$ norm is well justified. Other popular performance criteria, such as the $\ell_1$ and $\ell_2$ norms of the error vectors, can all be expressed in terms of the $\ell_\infty$ norm in a tight and non-trivial manner. More importantly, the $\ell_\infty$ norm of the error vector has a direct connection with the support recovery problem. To see this, assuming we know \emph{a priori} the minimal non-zero absolute value of the components of the sparse signal, then controlling the $\ell_\infty$ norm within half of that value would guarantee exact recovery of the support. Support recovery is arguably one of the most important and challenging problems in sparse signal recovery. In practical applications, the support is usually physically more significant than the component values. For example, in radar imaging using sparse signal recovery, the sparsity constraints are usually imposed on the discretized time--frequency domain. The distance and velocity of a target have a direct correspondence to the support of the sparse signal. The magnitude determined by coefficients of reflection is of less physical significance\cite{Baraniuk2007Radar,Herman2008Radar,Herman2008Highradar}. Refer to \cite{Tang2010Inf} for more discussions on sparse support recovery.

Another, perhaps more important, reason to use the $\ell_\infty$ norm as a performance criterion is the verifiability and computability of the resulting performance bounds. A general strategy to study the performance of sparse signal recovery is to define a measure of the goodness of the sensing matrix, and then derive performance bounds in terms of the goodness measure. The most well-known goodness measure is undoubtedly the restricted isometry constant (RIC)\cite{Candes2008RIP}. Upper bounds on the $\ell_2$ and $\ell_1$ norms of the error vectors for various recovery algorithms have been expressed in terms of the RIC. Unfortunately, it is extremely difficult to verify that the RIC of a specific sensing matrix satisfies the conditions for the bounds to be valid, and even more difficult to directly compute the RIC itself. Actually, the only known sensing matrices with nice RICs are certain types of random matrices \cite{Juditsky2010Verifiable}. By using the $\ell_\infty$ norm as a performance criterion, we develop a framework in which a family of goodness measures for the sensing matrices are verifiable and computable. The computability further justifies the connection of the $\ell_\infty$ norm with the support recovery problem, since for the connection described in the previous paragraph to be practically useful, we must be able to compute the error bounds on the $\ell_\infty$ norm.

The verifiability and computability open doors for wide applications. In many practical applications of sparse signal recovery, \emph{e.g.}, radar imaging \cite{Sen2011Multi}, sensor arrays \cite{Willsky2005Source}, DNA microarrays \cite{Baraniuk2007DNA}, and MRI \cite{lustig2007sparse}, it is beneficial to know the performance of the sensing system before its implementation and the taking of measurements. In addition, in these application areas, we usually have the freedom to optimally design the sensing matrix. For example, in MRI the sensing matrix is determined by the sampling trajectory in the Fourier domain; in radar systems the optimal sensing matrix design is connected with optimal waveform design, a central topic of radar research. To optimally design the sensing matrix, we need to
\begin{enumerate}
  \item analyze how the performance of recovering $\x$ from $\y$ is affected by $A$, and define a function $\omega(A)$ to accurately quantify the goodness of $A$ in the context of sparse signal reconstruction;
  \item develop algorithms to efficiently verify that $\omega(A)$ satisfies the conditions for the bounds to hold, as well as to efficiently compute $\omega(A)$ for arbitrarily given $A$;
  \item design mechanisms to select within a matrix class the sensing matrix that is optimal in the sense of best $\omega(A)$.
\end{enumerate}
In this paper, we successfully address the first two points in the $\ell_\infty$ performance analysis framework.

We now preview our contributions. First of all, we propose using the $\ell_\infty$ norm as a performance criterion for sparse signal recovery and establish its connections with other performance criteria. We define a family of goodness measures of the sensing matrix, and use them to derive performance bounds on the $\ell_\infty$ norm of the recovery error vector. Performance bounds using other norms are expressed using the $\ell_\infty$ norm. Numerical simulations show that these bounds are tighter than the RIC based bounds when the sparsity levels of the signals are relatively small. Secondly and most importantly, using fixed point theory, we develop algorithms to efficiently compute the goodness measures for given sensing matrices by solving a series of second-order cone programs or linear programs, depending on the specific goodness measure being computed. We analytically demonstrate the algorithms' convergence to the global optima from any initial point. As a by-product, we obtain a fast algorithm to verify the sufficient condition guaranteeing exact sparse recovery via $\ell_1$ minimization. Finally, we show that the goodness measures are non-degenerate for subgaussian and isotropic random sensing matrices as long as the number of measurements is relatively large, a result parallel to that of the RIC for random matrices.

Several attempts have been made to address the verifiability and computability of performance analysis for sparse signal recovery, mainly based on the RIC \cite{Candes2006Uncertainty, Candes2008RIP} and the Null Space Property (NSP) \cite{Cohen2009NSP}. Due to the difficulty of explicitly computing the RIC and verifying the NSP, researchers use relaxation techniques to approximate these quantities. Examples include semi-definite programming relaxation \cite{dAspremont2007sparsePCA, dAspermont2010Nullspace} and linear programming relaxation \cite{Juditsky2010Verifiable}. To the best of the authors' knowledge, the algorithms of \cite{dAspermont2010Nullspace} and \cite{Juditsky2010Verifiable} represent state-of-the-art techniques in verifying the sufficient condition of unique $\ell_1$ recovery. In this paper, we directly address the computability of the performance bounds. More explicitly, we define the goodness measures of the sensing matrices as optimization problems and design efficient algorithms with theoretical convergence guarantees to solve the optimization problems. An algorithm to verify a sufficient condition for exact $\ell_1$ recovery is obtained only as a by-product. Our implementation of the algorithm performs orders of magnitude faster than the state-of-the-art techniques in \cite{dAspermont2010Nullspace} and \cite{Juditsky2010Verifiable},  consumes much less memory, and produces comparable results.

The paper is organized as follows. In Section \ref{sec:model}, we introduce notations, and we present the measurement model, three convex relaxation algorithms, and the sufficient and necessary condition for exact $\ell_1$ recovery. In section \ref{sec:bounds}, we derive performance bounds on the $\ell_\infty$ norms of the recovery errors for several convex relaxation algorithms. In Section \ref{sec:computation}, we design algorithms to verify a sufficient condition for exact $\ell_1$ recovery in the noise-free case, and to compute the goodness measures of arbitrarily given sensing matrices. Section \ref{sec:random} is devoted to the probabilistic analysis of our $\ell_\infty$ performance measures.  We evaluate the algorithms' performance in Section \ref{sec:numerical}. Section \ref{sec:conclusions} summarizes our conclusions.

\section{Notations, Measurement Model, and Recovery Algorithms}\label{sec:model}

In this section, we introduce notations and the measurement model, and review recovery algorithms based on $\ell_1$ minimization.

For any vector $\x \in \R^n$, the norm $\|\x\|_{k,1}$ is the summation of the absolute values of the $k$ (absolutely) largest components of $\x$. In particular, the $\ell_\infty$ norm $\|\x\|_\infty = \|\x\|_{1,1}$ and the $\ell_1$ norm $\|\x\|_1 = \|\x\|_{n,1}$. The classical inner product in $\R^n$ is denoted by $\left<\cdot, \cdot\right>$, and the $\ell_2$ (or Euclidean) norm is $\|\x\|_2 = \sqrt{\left<\x,\x\right>}$. We use $\|\cdot\|_\diamond$ to denote a general norm.

The support of $\x$, $\supp(\x)$, is the index set of the non-zero components of $\x$. The size of the support, usually denoted by the $\ell_0$ ``norm" $\|\x\|_0$,  is the sparsity level of $\x$. Signals of sparsity level at most $k$ are called $k-$sparse signals. If $S \subset \{1,\cdots,n\}$ is an index set, then $|S|$ is the cardinality of $S$, and $\x_S \in \R^{|S|}$ is the vector formed by the components of $\x$ with indices in $S$.

We use $\e_i$, $\zero$, $\bs O$, and $\bone$ to denote respectively the $i$th canonical basis vector, the zero column vector, the zero matrix, and the column vector with all ones.

Suppose $\x$ is a $k-$sparse signal. In this paper, we observe $\x$ through the following linear model:
\begin{eqnarray}\label{eqn:model}
  \y &=& A\x + \w,
\end{eqnarray}
where $A \in \R^{m\times n}$ is the measurement/sensing matrix, $\y$ is the measurement vector, and $\w$ is noise.

Many algorithms have been proposed to recover $\x$ from $\y$ by exploiting the sparseness of $\x$. We focus on three algorithms based on $\ell_1$ minimization: the Basis Pursuit \cite{Donoho1998Atomic}, the Dantzig selector \cite{Candes2007Dantzig}, and the LASSO estimator \cite{Tibshirani1996Lasso}.
\begin{eqnarray}
\hskip -1cm &&\text{Basis Pursuit:}\min_{\z \in \R^n}\|\z\|_1 \text{\ \ s.t.\ } \|\y - A\z\|_\diamond \leq \varepsilon\label{bp}\\
\hskip -1cm && \text{Dantzig:} \min_{\z \in \R^n}\|\z\|_1 \text{\ \ s.t. \ } \|A^T(\y - A\z)\|_\infty \leq \mu\label{ds}\\
\hskip -1cm &&\text{LASSO:} \min_{\z \in \R^n} \frac{1}{2}\|\y - A\z\|_2^2 + \mu \|\z\|_1 \label{lasso}.
\end{eqnarray}
Here $\mu$ is a tuning parameter, and $\varepsilon$ is a measure of the noise level. All three optimization problems have efficient implementations using convex programming or even linear programming.

In the noise-free case where $\w = 0$, roughly speaking all the three algorithms reduce to
\begin{eqnarray}\label{eqn:l1relaxation}
\min_{\z \in \R^n}\|\z\|_1 \text{\ \ s.t.\ } A\z = A\x,
\end{eqnarray}
which is the $\ell_1$ relaxation of the NP hard $\ell_0$ minimization problem:
\begin{eqnarray}
  \min_{\z \in \R^n}\|\z\|_0 \text{\ \ s.t.\ } A\z = A\x.
\end{eqnarray}

A minimal requirement on $\ell_1$ minimization algorithms is the \emph{uniqueness and exactness} of the solution $\hx \df \mathrm{argmin}_{\z: A\z = A\x} \|\x\|_1$, \emph{i.e.}, $\hx = \x$. When the true signal $\x$ is $k-$sparse, the sufficient and necessary condition for exact $\ell_1$ recovery is \cite{zhang2005overunder, donoho2001uncertainty, donoho2004highdimensional}
\begin{eqnarray}\label{nullspaceproperty}
  \sum_{i\in S}|\z_i| < \sum_{i \notin S} |\z_i|, \forall \z \in \ker(A), |S| \leq k,
\end{eqnarray}
where $\ker(A) \df \{\z: A\z = 0\}$ is the kernel of $A$, and $S \subset \{1,\ldots,n\}$ is an index set. Expressed in terms of $\|\cdot\|_{k,1}$, the necessary and sufficient condition becomes
\begin{eqnarray}\label{eqn:suff}
  \|\z\|_{k,1} < \frac{1}{2} \|\z\|_1, \forall \z \in \ker(A).
\end{eqnarray}

The approaches in \cite{Juditsky2010Verifiable} and \cite{dAspermont2010Nullspace} for verifying the sufficient condition \eqref{eqn:suff} are based on relaxing the following optimization problem in various ways:
\begin{eqnarray}\label{eqn:alphak}
  \alpha_k &=& \max_{\z} \|\z\|_{k,1} \text{\ s.t.\ } A\z = 0, \|\z\|_1 \leq 1.
\end{eqnarray}
Clearly, $\alpha_k < 1/2$ is necessary and sufficient for exact $\ell_1$ recovery for $k-$sparse signals. Unfortunately, the direct computation of \eqref{eqn:alphak} for general $k$ is extremely difficult: it is the maximization of a norm (convex function) over a polyhedron (convex set) \cite{Bodlaender1990Normmaximization}. In \cite{Juditsky2010Verifiable}, in a very rough sense $\alpha_1$ was computed by solving $n$ linear programs:
\begin{eqnarray}\label{eqn:dual}
  \min_{\y_i \in \R^m} \|\e_i - A^T\y_i\|_\infty, i = 1,\cdots,n,
\end{eqnarray}
where $\e_i$ is the $i$th canonical basis in $\R^n$. This, together with the observation that $\alpha_k < k\alpha_1$, yields an efficient algorithm to verify \eqref{eqn:suff}. However, in \cite{tang2011cmsv}, we found that the primal-dual method of directly solving \eqref{eqn:suff} as the following $n$ linear programs
\begin{eqnarray}
  \max \z_i \text{\ s.t. \ } A\z = 0, \|\z\|_1 \leq 1
\end{eqnarray}
gives rise to an algorithm orders of magnitude faster. In the next section, we will see how the computation of $\alpha_1$ arises naturally in the context of $\ell_\infty$ performance evaluation.

\section{Performance Bounds on the $\ell_\infty$ Norms of the Recovery Errors}\label{sec:bounds}
In this section, we derive performance bounds on the $\ell_\infty$ norms of the error vectors. We first establish a theorem characterizing the error vectors for the $\ell_1$ recovery algorithms, whose proof is given in Appendix \ref{app:pf:errorcharacteristics}

\begin{proposition}\label{pro:errorcharacteristics}
Suppose $\x$ in \eqref{eqn:model} is $k-$sparse and the noise $\w$ satisfies $\|\w\|_\diamond \leq \varepsilon$, $\|A^T\w\|_\infty \leq \mu$, and $\|A^T\w\|_\infty \leq \kappa \mu, \kappa \in (0,1)$, for the Basis Pursuit, the Dantzig selector, and the LASSO estimator, respectively. Define $\h = \hx - \x$ as the error vector for any of the three $\ell_1$ recovery algorithms \eqref{bp}, \eqref{ds}, and \eqref{lasso}. Then we have
\begin{eqnarray}
c\|\h\|_{k,1} \geq  \|\h\|_1,
\end{eqnarray}
where $ c = 2$ for the Basis Pursuit and the Dantzig selector, and $c = 2/(1-\kappa)$ for the LASSO estimator.
\end{proposition}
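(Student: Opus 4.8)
The plan is to exploit the defining optimality of each estimator to derive a ``cone condition'' on the error vector $\h$, of the form $\|\h_{S^c}\|_1 \leq \rho \|\h_S\|_1$, where $S = \supp(\x)$ with $|S| \leq k$ and $S^c$ is its complement; once this is in hand the desired bound follows immediately. Indeed, since $\x$ is $k$-sparse we have $\x_{S^c} = \zero$, so writing $\hx = \x + \h$ and splitting the $\ell_1$ norm across $S$ and $S^c$ gives, by the reverse triangle inequality, $\|\hx\|_1 \geq \|\x\|_1 - \|\h_S\|_1 + \|\h_{S^c}\|_1$. A cone condition $\|\h_{S^c}\|_1 \leq \rho\|\h_S\|_1$ then yields $\|\h\|_1 = \|\h_S\|_1 + \|\h_{S^c}\|_1 \leq (1+\rho)\|\h_S\|_1 \leq (1+\rho)\|\h\|_{k,1}$, the last step using $|S| \leq k$ so that $\|\h_S\|_1$ is a sum of at most $k$ entries and is therefore bounded by $\|\h\|_{k,1}$. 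Thus the whole proof reduces to establishing the cone condition with the correct $\rho$, giving $c = 1+\rho$.

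For the Basis Pursuit and the Dantzig selector the key observation is that the true signal $\x$ is \emph{feasible}: under $\|\w\|_\diamond \leq \varepsilon$ we have $\|\y - A\x\|_\diamond = \|\w\|_\diamond \leq \varepsilon$, and under $\|A^T\w\|_\infty \leq \mu$ we have $\|A^T(\y - A\x)\|_\infty = \|A^T\w\|_\infty \leq \mu$. Since $\hx$ minimizes $\|\z\|_1$ over the respective feasible set, optimality gives $\|\hx\|_1 \leq \|\x\|_1$. Combining with the reverse-triangle bound above forces $\|\h_{S^c}\|_1 \leq \|\h_S\|_1$, i.e. $\rho = 1$ and $c = 2$.

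The main obstacle is the LASSO case, where the objective is no longer a pure $\ell_1$ norm and $\x$ need not be feasible in any sense, so I must work with the full quadratic optimality inequality $\tfrac{1}{2}\|\y - A\hx\|_2^2 + \mu\|\hx\|_1 \leq \tfrac{1}{2}\|\y - A\x\|_2^2 + \mu\|\x\|_1$. Substituting $\y - A\x = \w$ and $\y - A\hx = \w - A\h$ and expanding the squared norm, the $\tfrac{1}{2}\|\w\|_2^2$ terms cancel and I can discard the nonnegative term $\tfrac{1}{2}\|A\h\|_2^2$ to obtain $\mu(\|\hx\|_1 - \|\x\|_1) \leq \langle A^T\w, \h\rangle \leq \|A^T\w\|_\infty\|\h\|_1 \leq \kappa\mu\|\h\|_1$, where I use H\"older's inequality together with the hypothesis $\|A^T\w\|_\infty \leq \kappa\mu$. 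Feeding $\|\hx\|_1 - \|\x\|_1 \geq \|\h_{S^c}\|_1 - \|\h_S\|_1$ into this and writing $\|\h\|_1 = \|\h_S\|_1 + \|\h_{S^c}\|_1$ gives $(1-\kappa)\|\h_{S^c}\|_1 \leq (1+\kappa)\|\h_S\|_1$; here the assumption $\kappa \in (0,1)$ is exactly what keeps the coefficient $1-\kappa$ positive so that the cone closes, yielding $\rho = (1+\kappa)/(1-\kappa)$ and hence $c = 1+\rho = 2/(1-\kappa)$.
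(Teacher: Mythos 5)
Your proposal is correct and follows essentially the same route as the paper's proof: feasibility of $\x$ plus $\ell_1$-optimality (respectively the LASSO objective inequality with the cross term bounded via H\"older and $\|A^T\w\|_\infty \leq \kappa\mu$) to obtain the cone condition $\|\h_{S^c}\|_1 \leq \rho\|\h_S\|_1$, followed by $\|\h\|_1 \leq (1+\rho)\|\h_S\|_1 \leq (1+\rho)\|\h\|_{k,1}$. Your only departure is organizational --- abstracting both cases into a single cone-condition template and making the final step $\|\h_S\|_1 \leq \|\h\|_{k,1}$ explicit, which the paper leaves implicit --- and the computations match the paper's line for line.
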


An immediate corollary of Proposition \ref{pro:errorcharacteristics} is to bound the $\ell_1$ and $\ell_2$ norms of the error vector using the $\ell_\infty$ norm:
\begin{corollary}\label{cor:connections}
Under the assumptions of Proposition \ref{pro:errorcharacteristics}, we have
\begin{eqnarray}
  \|\h\|_1 & \leq & ck \|\h\|_\infty,\label{eqn:l1linf}\\
  \|\h\|_2 & \leq & \sqrt{ck} \|\h\|_\infty. 
\label{eqn:l1linf2}
\end{eqnarray}
Furthermore, if $S = \supp(\x)$ and $\beta = \min_{i\in S}|\x_i|$, then $\|\h\|_\infty < \beta/2$ implies
\begin{eqnarray}
  \supp(\max(|\hx|-\beta/2,0) &=& \supp(\x),
\end{eqnarray}
\emph{i.e.}, a thresholding operator recovers the signal support.
\end{corollary}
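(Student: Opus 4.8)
The plan is to derive all three conclusions directly from Proposition \ref{pro:errorcharacteristics}, which already supplies the crucial inequality $c\|\h\|_{k,1} \geq \|\h\|_1$, combined only with elementary relations among the norms. For the first bound \eqref{eqn:l1linf}, I would observe that $\|\h\|_{k,1}$ is by definition the sum of the $k$ absolutely largest entries of $\h$, each of which is at most $\|\h\|_\infty = \|\h\|_{1,1}$; hence $\|\h\|_{k,1} \leq k\|\h\|_\infty$. Feeding this into the Proposition gives $\|\h\|_1 \leq c\|\h\|_{k,1} \leq ck\|\h\|_\infty$. For the second bound \eqref{eqn:l1linf2}, I would use the standard interpolation estimate $\|\h\|_2^2 = \sum_i |h_i|^2 \leq \|\h\|_\infty \sum_i |h_i| = \|\h\|_\infty\,\|\h\|_1$, and then substitute the bound on $\|\h\|_1$ just established to obtain $\|\h\|_2^2 \leq ck\|\h\|_\infty^2$; taking square roots yields the claim.

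For the support-recovery statement I would argue by cases, writing $\hx = \x + \h$ and invoking the hypothesis $\|\h\|_\infty < \beta/2$. If $i \in S = \supp(\x)$, then $|\x_i| \geq \beta$, so the triangle inequality gives $|\hx_i| \geq |\x_i| - |h_i| \geq \beta - \|\h\|_\infty > \beta/2$, whence the thresholded value $\max(|\hx_i|-\beta/2,0)$ is strictly positive. If instead $i \notin S$, then $\x_i = 0$, so $|\hx_i| = |h_i| \leq \|\h\|_\infty < \beta/2$, and the thresholded value is exactly zero. Establishing both directions shows that the thresholding operator has support precisely $S$, which is the asserted equality.

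Since every step reduces to a routine norm inequality or a triangle-inequality case split, I do not expect any substantial obstacle here; the proposition does all the real work. The only point deserving minor care is the support identity, where both inclusions must be checked rather than just the easier containment $\supp(\max(|\hx|-\beta/2,0)) \subseteq \supp(\x)$, and where the strict inequality $\|\h\|_\infty < \beta/2$ (rather than $\leq$) is what guarantees the on-support entries survive the threshold.
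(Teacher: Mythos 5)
Your proof is correct and follows precisely the argument the paper leaves implicit when it calls the corollary ``immediate'': bounding $\|\h\|_{k,1} \leq k\|\h\|_\infty$ and feeding it into Proposition \ref{pro:errorcharacteristics}, interpolating $\|\h\|_2^2 \leq \|\h\|_\infty\|\h\|_1$, and the triangle-inequality case split on $S$ versus $S^c$ (with both inclusions checked) for the support identity. No gaps to report.
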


For ease of presentation, we have the following definition:
\begin{definition}\label{def:linfcmsv}
For any real number $s \in [1,n]$ and matrix $A\in \R^{m\times n}$, define
\begin{eqnarray}
  \omega_{\diamond}(Q,s) &=& \min_{\z: \|\z\|_1/\|\z\|_\infty \leq s} \frac{\|Q\z\|_\diamond}{\|\z\|_\infty},
\end{eqnarray}
where $Q$ is either $A$ or $A^TA$.
\end{definition}

Now we present the error bounds on the $\ell_\infty$ norm of the error vectors for the Basis Pursuit, the Dantzig selector, and the LASSO estimator.
\begin{theorem}\label{thm:errorbound}
Under the assumption of Proposition \ref{pro:errorcharacteristics}, we have
\begin{eqnarray}
  \|\hx - \x\|_\infty \leq \frac{2\varepsilon}{\omega_\diamond(A,2k)}
\end{eqnarray}
for the Basis Pursuit,
\begin{eqnarray}
  \|\hx-\x\|_\infty \leq \frac{2\mu}{\omega_\infty(A^TA,2k)}
\end{eqnarray}
for the Dantzig selector, and
\begin{eqnarray}
  \|\hx-\x\|_\infty \leq \frac{(1+\kappa)\mu}{\omega_\infty(A^TA,2k/(1-\kappa))}
\end{eqnarray}
for the LASSO estimator.
\end{theorem}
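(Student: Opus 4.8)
The plan is to combine Proposition \ref{pro:errorcharacteristics} with Definition \ref{def:linfcmsv} to pin down the cone in which the error vector lives, and then to bound the image $\|Q\h\|_\diamond$ separately for each algorithm using its feasibility or optimality conditions.

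First I would translate the conclusion of Proposition \ref{pro:errorcharacteristics} into a statement about the ratio $\|\h\|_1/\|\h\|_\infty$. Since $\|\h\|_{k,1}$ is the sum of the $k$ largest magnitudes of $\h$, each of which is at most $\|\h\|_\infty$, we have $\|\h\|_{k,1} \leq k\|\h\|_\infty$. Combined with $c\|\h\|_{k,1} \geq \|\h\|_1$ this yields $\|\h\|_1 \leq ck\|\h\|_\infty$, that is, $\|\h\|_1/\|\h\|_\infty \leq ck$. With $c = 2$ for the Basis Pursuit and the Dantzig selector, and $c = 2/(1-\kappa)$ for the LASSO, this reproduces exactly the second arguments $2k$ and $2k/(1-\kappa)$ of $\omega$ appearing in the theorem. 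Assuming $\h \neq \zero$ (otherwise the bounds are trivial), $\h$ is thus feasible for the minimization defining $\omega_\diamond(Q,s)$ with $s = ck$, so by definition $\omega_\diamond(Q,s) \leq \|Q\h\|_\diamond/\|\h\|_\infty$, which rearranges to $\|\h\|_\infty \leq \|Q\h\|_\diamond/\omega_\diamond(Q,s)$.

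It then remains to bound $\|Q\h\|_\diamond$ from above in each case. For the Basis Pursuit, $Q = A$: both $\hx$ and $\x$ are feasible for \eqref{bp}, so the triangle inequality gives $\|A\h\|_\diamond = \|(\y - A\x) - (\y - A\hx)\|_\diamond \leq \|\y - A\x\|_\diamond + \|\y - A\hx\|_\diamond \leq 2\varepsilon$. For the Dantzig selector and the LASSO, $Q = A^TA$, and I would use the decomposition $A^TA\h = A^T(A\hx - \y) + A^T\w$, which follows from $A\x - \y = -\w$. For the Dantzig selector, feasibility of $\hx$ in \eqref{ds} gives $\|A^T(A\hx - \y)\|_\infty \leq \mu$, and combined with the assumed $\|A^T\w\|_\infty \leq \mu$ this yields $\|A^TA\h\|_\infty \leq 2\mu$. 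For the LASSO, the first-order optimality (subgradient) condition for \eqref{lasso} forces $\|A^T(\y - A\hx)\|_\infty \leq \mu$, and with $\|A^T\w\|_\infty \leq \kappa\mu$ this gives $\|A^TA\h\|_\infty \leq (1+\kappa)\mu$. Substituting these three bounds into $\|\h\|_\infty \leq \|Q\h\|_\diamond/\omega_\diamond(Q,s)$ produces the three claimed inequalities.

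The routine parts are the triangle inequalities in the final step; the only genuinely delicate point is the LASSO case, where I must invoke the subgradient characterization of the minimizer, $A^T(\y - A\hx) \in \mu\,\partial\|\hx\|_1$, together with the fact that $\ell_1$ subgradients have $\ell_\infty$ norm at most one, to establish $\|A^T(\y - A\hx)\|_\infty \leq \mu$ rather than a simple feasibility argument as for the other two algorithms. This is the main obstacle, though it is standard once the optimality condition is written down.
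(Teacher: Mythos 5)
Your proposal is correct and follows essentially the same route as the paper's proof: translate Proposition \ref{pro:errorcharacteristics} into $\|\h\|_1/\|\h\|_\infty \leq ck$ so that $\h$ is feasible for the minimization defining $\omega_\diamond(Q,ck)$, then bound $\|Q\h\|_\diamond$ by $2\varepsilon$, $2\mu$, and $(1+\kappa)\mu$ via feasibility (or, for the LASSO, optimality) and the triangle inequality. Your explicit subgradient argument for the LASSO, $A^T(\y - A\hx) \in \mu\,\partial\|\hx\|_1$ hence $\|A^T(\y-A\hx)\|_\infty \leq \mu$, correctly fills in a step the paper dispatches with ``similarly.''
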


\begin{proof}
Observe that for the Basis Pursuit
\begin{eqnarray}
\|A(\hx - \x)\|_2 &\leq& \|\y - A\hx\|_2 + \|\y - A\x\|_2\nn\\
&\leq& \varepsilon + \|A\w\|_2\nn\\
&\leq& 2\varepsilon,
\end{eqnarray}
and similarly,
\begin{eqnarray}
  \|A^TA(\hx - \x)\|_\infty \leq 2\mu
\end{eqnarray}
for the Dantzig selector, and
\begin{eqnarray}
  \|A^TA(\hx - \x)\|_\infty &\leq& (1+\kappa)\mu
\end{eqnarray}
for the LASSO estimator.
The conclusions of Theorem \ref{thm:errorbound} follow from equations \eqref{eqn:l1linf}, \eqref{eqn:l1linf2}, and Definition \ref{def:linfcmsv}.
\end{proof}

One of the primary contributions of this work is the design of algorithms that compute $\omega_\diamond(A,s)$ and $\omega_\infty(A^TA,s)$ efficiently. The algorithms provide a way to numerically assess the performance of the Basis Pursuit, the Dantzig selector, and the LASSO estimator according to the bounds given in Theorem \ref{thm:errorbound}. According to Corollary \ref{cor:connections}, the correct recovery of signal support is also guaranteed by reducing the $\ell_\infty$ norm to some threshold. In Section \ref{sec:random}, we also demonstrate that the bounds in Theorem \ref{thm:errorbound} are non-trivial for a large class of random sensing matrices, as long as $m$ is relatively large. Numerical simulations in Section \ref{sec:numerical} show that in many cases the error bounds on the $\ell_2$ norms based on Corollary \ref{cor:connections} and Theorem \ref{thm:errorbound} are tighter than the RIC based bounds. We expect the bounds on the $\ell_\infty$ norms in Theorem \ref{thm:errorbound} are even tighter, as we do not need the relaxation in Corollary \ref{cor:connections}.

We note that a prerequisite for these bounds to be valid is the positiveness of the involved $\omega_\diamond(\cdot)$. We call the validation of $\omega_\diamond(\cdot) > 0$ the verification problem. Note that from Theorem \ref{thm:errorbound}, $\omega_\diamond(\cdot) > 0$ implies the exact recovery of the true signal $\x$ in the noise-free case. Therefore, verifying $\omega_\diamond(\cdot) > 0$ is equivalent to verifying a sufficient condition for exact $\ell_1$ recovery.

\section{Verification and Computation of $\omega_\diamond$}\label{sec:computation}
In this section, we present algorithms for verification and computation of $\omega_\diamond(\cdot)$. We will present a very general algorithm and make it specific only when necessary. For this purpose, we use $Q$ to denote either $A$ or $A^TA$, and use $\|\cdot\|_\diamond$ to denote a general norm.
\subsection{Verification of $\omega_\diamond > 0$}
Verifying $\omega_\diamond(Q,s) > 0$ amounts to making sure $\|\z\|_1/\|\z\|_\infty \leq s$ for all $\z$ such that $Q\z = 0$. Equivalently, we can compute
\begin{eqnarray}\label{eqn:s_star}
  s_* &=& \min_{\z} \frac{\|\z\|_1}{\|\z\|_\infty} \text{\ s.t. \ } Q\z = 0.
\end{eqnarray}
Then, when $s < s_*$, we have $\omega_\diamond(Q,s) > 0$.
We rewrite the optimization \eqref{eqn:s_star} as
\begin{eqnarray}\label{eqn:max_inf_Q_1}
  \frac{1}{s_*} = \max_{\z}\|\z\|_\infty \text{\ s.t. \ } Q\z = 0, \|\z\|_1 \leq 1,
\end{eqnarray}
which is solved using the following $n$ linear programs:
\begin{eqnarray}\label{eqn:nlinear}
  \max_{\z} \z_i \text{\ s.t. \ } Q\z = 0,  \|\z\|_1 \leq 1.
\end{eqnarray}
The dual problem for \eqref{eqn:nlinear} is
\begin{eqnarray}\label{eqn:nlinear_dual}
  \min_{\bs \lambda} \|\e_i - Q^T\bs \lambda\|_\infty,
\end{eqnarray}
where $\e_i$ is the $i$th canonical basis vector.

We solve \eqref{eqn:nlinear} using the primal-dual algorithm expounded in Chapter 11 of \cite{boyd2004convex}, which gives an implementation much more efficient than the one for solving its dual \eqref{eqn:nlinear_dual} in \cite{Juditsky2010Verifiable}. This method is also used to implement the $\ell_1$ MAGIC for sparse signal recovery \cite{romberg2005l1magic}. Due to the equivalence of $A^TA\z = 0$ and $A\z = 0$, we always solve \eqref{eqn:max_inf_Q_1} for $Q = A$ and avoid $Q = A^TA$. The former apparently involves solving linear programs of smaller size. In practice, we usually replace $A$ with the matrix with orthogonal rows obtained from the economy-size QR decomposition of $A^T$.

As a dual of \eqref{eqn:nlinear_dual}, \eqref{eqn:nlinear} (and hence \eqref{eqn:max_inf_Q_1} and \eqref{eqn:s_star}) shares the same limitation as \eqref{eqn:nlinear_dual}, namely, it verifies $\omega_\diamond > 0$ only for $s$ up to $2\sqrt{2m}$. We now reformulate Proposition 4 of \cite{Juditsky2010Verifiable} in our framework:
\begin{proposition}\emph{\cite[Proposition 4]{Juditsky2010Verifiable}}\label{pro:sqrtmbd}
For any $m\times n$ matrix $A$ with $n \geq 32m$, one has
\begin{eqnarray}
  s_* &=& \min \left\{\frac{\|\z\|_1}{\|\z\|_\infty}: Q\z = 0\right\} < 2\sqrt{2m}.
\end{eqnarray}
\end{proposition}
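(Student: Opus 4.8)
The plan is to argue by contradiction, converting the claim about $s_*$ into a statement about how well the canonical basis vectors can be approximated, in $\ell_\infty$, by the row space of $A$. First I would dispose of the choice of $Q$: since $A^TA\z = 0$ iff $A\z = 0$, the feasible set in \eqref{eqn:s_star} is identical for $Q = A$ and $Q = A^TA$, so I take $Q = A$ throughout. By \eqref{eqn:max_inf_Q_1} together with the primal--dual pair \eqref{eqn:nlinear}--\eqref{eqn:nlinear_dual} (and the symmetry of the feasible set under $\z\mapsto-\z$),
\[
\frac{1}{s_*} \;=\; \max_{1\le i\le n}\ \min_{\bs\lambda}\ \|\e_i - A^T\bs\lambda\|_\infty \;=\; \max_{1\le i\le n}\ \mathrm{dist}_\infty(\e_i, R),
\]
where $R \df \mathrm{range}(A^T)$ is the row space of $A$, a subspace of dimension $\rank(A)\le m$. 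Thus proving $s_* < 2\sqrt{2m}$ is equivalent to exhibiting one index $i$ whose canonical vector $\e_i$ lies at $\ell_\infty$-distance strictly larger than $\delta \df 1/(2\sqrt{2m})$ from $R$.

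Suppose, for contradiction, that $s_*\ge 2\sqrt{2m}$, i.e.\ $\mathrm{dist}_\infty(\e_i,R)\le\delta$ for every $i$. Then for each $i$ there is an approximant $\r^{(i)}\in R$ with $\|\e_i-\r^{(i)}\|_\infty\le\delta$. I would assemble these into the $n\times n$ matrix $G$ whose $i$th row is $(\r^{(i)})^T$. Two features are then immediate: (a) every row of $G$ lies in $R$, so $\rank(G)\le\dim R\le m$; and (b) $G$ is entrywise close to the identity, with diagonal entries $G_{ii}\ge 1-\delta$ and off-diagonal entries $|G_{ij}|\le\delta$. The whole point is that a matrix of rank at most $m$ cannot be this close to $\I_n$ once $n$ is large compared with $m$; the remaining work is to make that quantitative with the stated constants.

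For the quantitative step I would squeeze $\tr(G)$ from both sides. From below, $\tr(G)=\sum_i G_{ii}\ge n(1-\delta)$. From above, von Neumann's trace inequality applied to $\tr(G)=\tr(\I_n G)$ gives $\tr(G)\le\|G\|_*=\sum_k\sigma_k(G)$, and since $G$ has at most $m$ nonzero singular values, Cauchy--Schwarz yields $\|G\|_*\le\sqrt{m}\,\|G\|_{\mathrm{F}}$. Bounding the Frobenius norm entrywise, $\|G\|_{\mathrm{F}}^2\le n(1+\delta)^2+n(n-1)\delta^2$. Chaining these,
\[
n(1-\delta)\;\le\;\sqrt{m}\,\sqrt{\,n(1+\delta)^2+n(n-1)\delta^2\,},
\]
and squaring and dividing by $n$ gives $n\big[(1-\delta)^2-m\delta^2\big]\le m(1+\delta)^2$. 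Finally I substitute $\delta^2=1/(8m)$, so that $m\delta^2=\tfrac18$ and the bracket $(1-\delta)^2-\tfrac18$ is strictly positive on the whole range, and use $n\ge 32m$: the left side is then at least $32m\big((1-\delta)^2-\tfrac18\big)$, and the inequality reduces to $27-66\delta+31\delta^2\le 0$, which fails for every $\delta\le 1/(2\sqrt2)$ (the relevant range, the endpoint attained at $m=1$). This contradiction establishes $s_*<2\sqrt{2m}$.

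I expect the main obstacle to be the quantitative low-rank step rather than the duality reduction: one must choose the right pair of matrix inequalities (the trace lower bound versus the nuclear-/Frobenius-norm upper bound) and verify that the constants $32$ and $2\sqrt2$ leave enough slack uniformly in $m\ge1$, since the $\delta$-dependence does not disappear. A secondary point to check carefully is that the diagonal/off-diagonal split in the Frobenius estimate and the $\sqrt m$ factor from the rank constraint combine in exactly the way that produces the clean normalization $m\delta^2=\tfrac18$, which is what makes the final scalar inequality independent of $m$.
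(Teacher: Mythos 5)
Your proof is correct, and since the paper does not prove this proposition at all but imports it from \cite[Proposition 4]{Juditsky2010Verifiable}, your argument is essentially a reconstruction of that source's: the duality reduction $1/s_* = \max_i \min_{\bs\lambda}\|\e_i - A^T\bs\lambda\|_\infty = \max_i \mathrm{dist}_\infty(\e_i,\mathrm{range}(A^T))$ is exactly the paper's primal--dual pair \eqref{eqn:nlinear}--\eqref{eqn:nlinear_dual}, and the perturbed-identity rank obstruction $\rank(G) \geq \mathrm{tr}(G)^2/\|G\|_{\mathrm{F}}^2$ (your trace-below, nuclear-above squeeze) is the standard engine behind the $2\sqrt{2m}$ barrier. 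I checked the arithmetic: the chain does reduce to $n\left[(1-\delta)^2 - m\delta^2\right] \leq m(1+\delta)^2$, the bracket is positive since $\delta \leq 1/(2\sqrt{2}) < 1 - \sqrt{1/8}$ fails to occur only outside the relevant range, and the final quadratic $27 - 66\delta + 31\delta^2$ has smaller root $\approx 0.55 > 1/(2\sqrt{2})$, so it is indeed strictly positive on $(0, 1/(2\sqrt{2})]$, yielding the contradiction and the strict inequality $s_* < 2\sqrt{2m}$.
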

\subsection{Computation of $\omega_\diamond$}
Now we turn to one of the primary contributions of this work, the computation of $\omega_\diamond$. The optimization problem is as follows:
\begin{eqnarray}\label{eqn:max_inf_Q_diamond}
  \omega_\diamond(Q,s) = \min_{\z} \frac{\|Q\z\|_\diamond}{\|\z\|_\infty} \text{\ s.t. \ } \frac{\|\z\|_1}{\|\z\|_\infty} \leq s,
\end{eqnarray}
or equivalently,
\begin{eqnarray}\label{eqn:optimizerho}
  \frac{1}{\omega_\diamond(Q,s)} = \max_{\z} \|\z\|_\infty \text{\ s.t. \ } \|Q\z\|_\diamond \leq 1, \frac{\|\z\|_1}{\|\z\|_\infty} \leq s.
\end{eqnarray}

We will show that $1/\omega_\diamond(Q,s)$ is the unique fixed point of certain scalar function. To this end, we define functions $f_{s,i}(\eta), i = 1,\ldots, n$ and $f_s(\eta)$ over $[0, \infty)$ parameterized by $s \in (1, s_*)$:
\begin{eqnarray}\label{def:fsi}
  f_{s,i}(\eta) & \df & \max_{\z} \left\{\z_i: \|Q\z\|_\diamond \leq 1, {\|\z\|_{1}} \leq s \eta\right\}\nn\\
  & = & \max_{\z} \left\{|\z_i|: \|Q\z\|_\diamond \leq 1, {\|\z\|_{1}} \leq s \eta\right\},
\end{eqnarray}
since the domain for the maximization is symmetric to the origin, and
\begin{eqnarray}\label{def:fs}
  f_s(\eta) &\df& \max_{\z} \left\{\|\z\|_{\infty}: \|Q\z\|_\diamond \leq 1, {\|\z\|_{1}} \leq s \eta\right\}\nn\\
  & = & \max_{\substack{\z: \|Q\z\|_\diamond \leq 1\\ {\|\z\|_{1}} \leq s \eta}} \max_i |\z_i|\nn\\
  &= & \max_i \max_{\substack{\z: \|Q\z\|_\diamond \leq 1\\ {\|\z\|_{1}} \leq s \eta}} |\z_i|\nn\\
  & = & \max_i f_{s,i}(\eta),
\end{eqnarray}
where for the last but one equality we have exchanged the two maximizations. For $\eta > 0$, it is easy to show that strong duality holds for the optimization problem defining $f_{s,i}(\eta)$. As a consequence, we have the dual form of $f_{s,i}(\eta)$:
\begin{eqnarray}\label{eqn:fsi_dual}
  f_{s,i}(\eta) &=& \min_{\bs \lambda} s\eta \|\e_i - Q^T\bs \lambda\|_\infty + \|\bs \lambda\|_\diamond^*,
\end{eqnarray}
where $\|\cdot\|_\diamond^*$ is the dual norm of $\|\cdot\|_\diamond$.

In the definition of $f_s(\eta)$, we basically replaced the $\|\z\|_{\infty}$ in the denominator of the fractional constraint in \eqref{eqn:optimizerho} with $\eta$. The following theorem states that the unique positive fixed point of $f_s(\eta)$ is exactly $1/\omega_\diamond(Q,s)$. See Appendix \ref{app:pf:fix_fs} for the proof.

\begin{theorem}\label{thm:fix_fs}
The functions $f_{s,i}(\eta)$ and $f_s(\eta)$ have the following properties:
\begin{enumerate}
  \item $f_{s,i}(\eta)$ and $f_s(\eta)$ are continuous in $\eta$;
  \item $f_{s,i}(\eta)$ and $f_s(\eta)$ are strictly increasing in $\eta$;
  \item $f_{s,i}(\eta)$ is concave for every $i$;
\item $f_s(0) = 0$, $f_s(\eta) \geq s\eta > \eta$ for sufficiently small $\eta > 0$, and there exists $\rho < 1$ such that $f_s(\eta) < \rho\eta$ for sufficiently large $\eta$; the same holds for $f_{s,i}(\eta)$;
    \item $f_{s,i}$ and $f_s(\eta)$ have unique positive fixed points $\eta_i^* = f_{s,i}(\eta_i^*)$ and $\eta^* = f_s(\eta^*)$, respectively; and $\eta^* = \max_i \eta_i^*$;
    \item The unique positive fixed point of $f_{s}(\eta)$, $\eta^*$, is equal to $1/\omega_\diamond(Q,s)$;
    \item For $\eta \in (0, \eta^*)$, we have $f_s(\eta) > \eta$; and for $\eta \in (\eta^*, \infty)$, we have $f_s(\eta) < \eta$; the same statement holds also for $f_{s,i}(\eta)$.
  \item For any $\epsilon > 0$, there exists $\rho_1(\epsilon) > 1$ such that $f_s(\eta) > \rho_1(\epsilon) \eta$ as long as $0 < \eta \leq (1-\epsilon) \eta^*$; and there exists $\rho_2(\epsilon) < 1$ such that $f_s(\eta) < \rho_2(\epsilon) \eta$ as long as $\eta > (1+\epsilon) \eta^*$.
\end{enumerate}
\end{theorem}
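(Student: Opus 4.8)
The plan is to make the \emph{dual representation} \eqref{eqn:fsi_dual} the master tool, since it exhibits each $f_{s,i}(\eta)$ as an infimum over $\bs\lambda$ of functions that are affine and nondecreasing in $\eta$ with slope $s\|\e_i-Q^T\bs\lambda\|_\infty\ge 0$. Several claims then fall out at once. An infimum of affine functions is concave, giving Property~3; a finite concave function is continuous on $(0,\infty)$ and extends continuously to $\eta=0$, where the feasible set $\{\z:\|Q\z\|_\diamond\le 1,\ \|\z\|_1\le s\eta\}$ collapses to $\{\zero\}$ so that $f_{s,i}(0)=0$, which covers Property~1 and part of Property~4; and $f_s=\max_i f_{s,i}$ inherits continuity. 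Monotonicity in Property~2 (nondecreasing) is immediate in the primal because enlarging $\eta$ enlarges the feasible set; strictness, where it holds, follows from the envelope theorem (the right derivative of $f_{s,i}$ is the slope $s\|\e_i-Q^T\bs\lambda^*\|_\infty$ at the optimal dual, positive unless $\e_i\in\mathrm{range}(Q^T)$). I note that the fixed-point conclusions below do not rest on strict monotonicity of $f_s$ itself but on the monotonicity of the ratio $f_s(\eta)/\eta$.

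For the growth bounds in Property~4 I would argue in the primal. Feeding the single-coordinate vector $\z=s\eta\,\e_j$ into the definition gives $f_{s,j}(\eta)\ge s\eta$ whenever $s\eta\|Q\e_j\|_\diamond\le 1$, hence $f_s(\eta)\ge s\eta>\eta$ for small $\eta$ (using $s>1$). The decay $f_s(\eta)<\rho\eta$ for large $\eta$ is where the hypothesis $s<s_*$ enters decisively, and I expect this to be the \textbf{main obstacle}. I would rescale by $\u=\z/\eta$, so a maximizer satisfies $\|\u\|_1\le s$ and $\|Q\u\|_\diamond\le 1/\eta$; if along some sequence $\eta\to\infty$ one had $f_s(\eta)/\eta\to L\ge 1$, then extracting a limit $\bar\u$ from the compact set $\{\|\u\|_1\le s\}$ yields $\bar\u\in\ker Q$ with $\|\bar\u\|_\infty=L>0$ and $\|\bar\u\|_1\le s$, so $\|\bar\u\|_1/\|\bar\u\|_\infty\le s<s_*$, contradicting the definition \eqref{eqn:s_star} of $s_*$. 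The same argument shows any $\rho\in(s/s_*,1)$ works.

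With Property~4 established, existence of a positive fixed point of $f_s$ follows from the intermediate value theorem applied to $\phi(\eta)=f_s(\eta)-\eta$, which is continuous, positive for small $\eta$, and negative for large $\eta$. For uniqueness, and for Properties~7 and~8, I would study the ratio $g(\eta)=f_s(\eta)/\eta=\max_i f_{s,i}(\eta)/\eta$. Dividing \eqref{eqn:fsi_dual} by $\eta$ writes $f_{s,i}(\eta)/\eta$ as an infimum of functions \emph{nonincreasing} in $\eta$ (the penalty $\|\bs\lambda\|_\diamond^*/\eta$ shrinks), so each ratio, and hence $g$, is nonincreasing. The only delicate point is that $g$ cannot equal $1$ on a nondegenerate interval: on such an interval a finite-cover argument forces some $f_{s,i}(\eta)/\eta\equiv 1$ on a subinterval, yet that ratio strictly decreases wherever the optimal dual is nonzero, so a constant ratio forces $\bs\lambda^*=\zero$ and hence the value $s\|\e_i\|_\infty=s\neq 1$, a contradiction. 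A nonincreasing $g$ crossing the level $1$ exactly once delivers the unique $\eta^*$ and Property~7, and the quantitative gaps $\rho_1(\epsilon)=g((1-\epsilon)\eta^*)>1$ and $\rho_2(\epsilon)=g((1+\epsilon)\eta^*)<1$ give Property~8.

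It remains to pin down $\eta^*$. The same ratio analysis gives each $f_{s,i}$ a unique fixed point $\eta_i^*$; writing $\hat\eta=\max_i\eta_i^*$, monotonicity of the ratios yields $f_{s,i}(\hat\eta)\le\hat\eta$ for every $i$ and $f_{s,i_0}(\hat\eta)=\hat\eta$ for the maximizing index, so $f_s(\hat\eta)=\hat\eta$ and uniqueness forces $\eta^*=\hat\eta$, which is Property~5. Finally, Property~6 is a matching of feasible sets: a maximizer realizing $f_s(\eta^*)=\eta^*$ obeys $\|\z\|_1\le s\eta^*=s\|\z\|_\infty$, so it is feasible for \eqref{eqn:optimizerho} and shows $1/\omega_\diamond(Q,s)\ge\eta^*$; conversely, a maximizer of \eqref{eqn:optimizerho} with value $R=1/\omega_\diamond(Q,s)$ is feasible for $f_s(R)$, giving $f_s(R)\ge R$ and hence $R\le\eta^*$ by Property~7. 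The two inequalities close the argument.
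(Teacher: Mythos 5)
Your proposal is correct, but it takes a genuinely different route from the paper's in several places, and in one place it is arguably cleaner. The paper proves continuity via Berge's Maximum Theorem applied to the constraint correspondence, gets the large-$\eta$ decay in Property~4 from the dual bound $f_{s,i}(\eta) \leq (s/s_*)\eta + \|\bs\lambda_i^*\|_\diamond^*$, proves uniqueness of the fixed point by a three-point concavity argument, and establishes Property~8 by a sequential-compactness contradiction. You instead derive continuity from concavity, prove the decay by a primal rescaling/compactness argument ($\u = \z/\eta$, limit point in $\ker Q$ violating the definition of $s_*$ when $s < s_*$), and organize Properties~5, 7, 8 around the nonincreasing ratio $g(\eta) = f_s(\eta)/\eta$, with the ``no flat at level $1$'' lemma (constant ratio forces the optimal dual to vanish, giving value $s \neq 1$) supplying uniqueness; this makes Property~8 quantitative and immediate rather than by contradiction. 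Your biggest divergence is Property~6: the paper runs a lengthy contradiction involving the auxiliary point $\z^{\mathrm{c}}$, the quantity $\rho$, and a perturbation of the smallest nonzero component (Figures~\ref{fig:proof} and \ref{fig:proof1}), whereas you observe that Property~7 can be proved first (it needs only continuity, uniqueness, and Property~4, not Property~6), after which both inequalities $\eta^* \leq 1/\omega_\diamond(Q,s)$ and $1/\omega_\diamond(Q,s) \leq \eta^*$ follow from two feasibility transfers between the problems defining $f_s$ and \eqref{eqn:optimizerho}. This is valid (attainment of the maximum in \eqref{eqn:optimizerho} follows from the same compactness argument you use for Property~4, and the paper assumes attainment too) and substantially shortens the paper's argument.

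Two small patches are needed, both trivial. First, concavity alone does not give continuity at the endpoint $\eta = 0$ (a concave function on $[0,\infty)$ may jump there); you need the primal bound $0 \leq f_{s,i}(\eta) \leq \max\{\z_i : \|\z\|_1 \leq s\eta\} = s\eta$, which your own remark about the feasible set collapsing essentially supplies. Second, taking $\rho_1(\epsilon) = g((1-\epsilon)\eta^*)$ only yields $f_s(\eta) \geq \rho_1(\epsilon)\eta$ with equality possible at $\eta = (1-\epsilon)\eta^*$; choose, e.g., $\rho_1(\epsilon) = \bigl(1 + g((1-\epsilon)\eta^*)\bigr)/2$ (and symmetrically for $\rho_2$) to get the strict inequalities of Property~8. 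Finally, your hedge on strict monotonicity in Property~2 is well taken: when $\e_i \in \mathrm{range}(Q^T)$ the function $f_{s,i}$ can genuinely plateau (its dual value is bounded by $\|\bs\lambda_0\|_\diamond^*$ for any $\bs\lambda_0$ with $Q^T\bs\lambda_0 = \e_i$), and the paper's own strict-inequality step silently assumes $\|\e_i - Q^T\bs\lambda_2^*\|_\infty > 0$, so this is a gap in the paper as much as in your write-up; you handle it correctly by making the fixed-point theory rest on the monotone ratio $g$ rather than on strictness of $f_s$.
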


Theorem \ref{thm:fix_fs} implies three ways to compute the fixed point of $\eta^*=1/\omega_\diamond(Q,s)$ for $f_s(\eta)$.
\begin{enumerate}
  \item \textbf{Naive Fixed Point Iteration:} Property 8) of Theorem \ref{thm:fix_fs} suggests that the fixed point iteration
      \begin{eqnarray}\label{eqn:fpiteration}
        \eta_{t+1} &=& f_s(\eta_t), t = 0, 1, \ldots
      \end{eqnarray}
      starting from any initial point $\eta_0 > 0$ converges to $\eta^*$, no matter $\eta_0 < \eta^*$ or $\eta_0 > \eta^*$. The algorithm can be made more efficient in the case $\eta_0 < \eta^*$. More specifically, since $f_s(\eta) = \max_i f_{s,i}(\eta)$, at each fixed point iteration, we set $\eta_{t+1}$ to be the first $f_{s,i}(\eta_t)$ that is greater than $\eta_t + \epsilon$ with $\epsilon$ some tolerance parameter. If for all $i$, $f_{s,i}(\eta_t) < \eta_t + \epsilon$, then $f_s(\eta_t) = \max_i f_{s,i}(\eta_t) < \eta_t + \epsilon$, which indicates the optimal function value can not be improved greatly and the algorithm should terminate. In most cases, to get $\eta_{t+1}$, we need to solve only one optimization problem $\max_{\z} \z_i: \|Q\z\|_\diamond \leq 1, {\|\z\|_{1}} \leq s \eta_t$ instead of $n$. This is in contrast to the case where $\eta_0 > \eta^*$, because in the later case we must compute all $f_{s,i}(\eta_t)$ to update $\eta_{t+1} = \max_i f_{s,i}(\eta_t)$. An update based on a single $f_{s,i}(\eta_t)$ might generate a value smaller than $\eta^*$.

\begin{figure}[h!t]
\hskip 0cm
\centering
\includegraphics[width = 0.7\textwidth, trim = 0mm 0mm 0mm 0mm, clip]{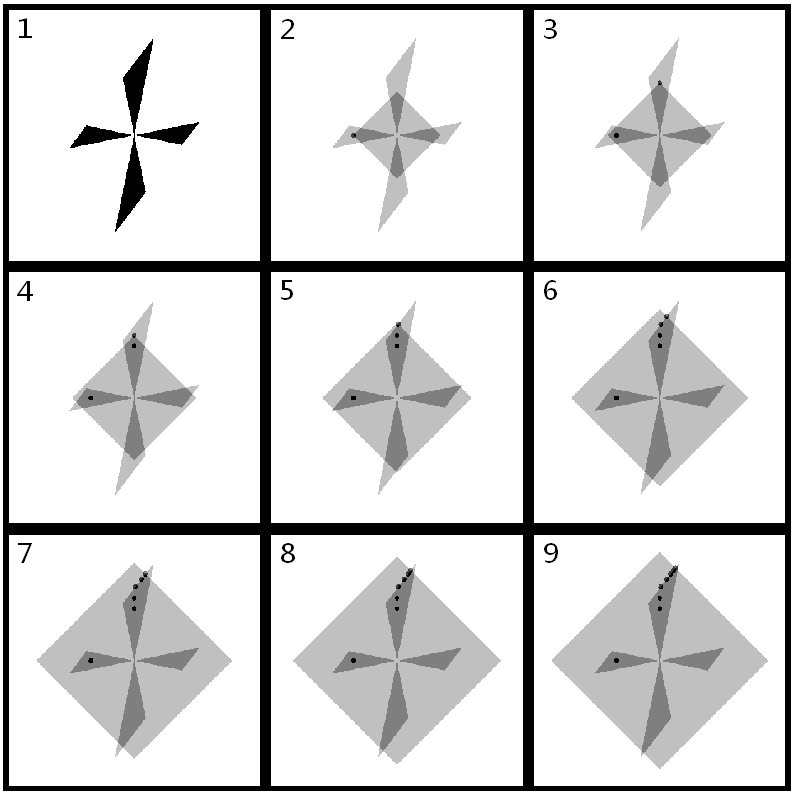}
\caption{Illustration of the naive fixed point iteration \eqref{eqn:fpiteration} when $\diamond = \infty$.}
\label{fig:convergence}
\end{figure}%
In Figure \ref{fig:convergence}, we illustrate the behavior of the naive fixed point iteration algorithm \eqref{eqn:fpiteration}. These figures are generated by Matlab for a two dimensional problem. We index the sub-figures from left to right and from top to bottom. The first (upper left) sub-figure shows the star-shaped region $\S = \{\z: \|Q\z\|_\infty \leq 1, \|\z\|_1/\|\z\|_\infty \leq s\}$. Starting from an initial $\eta_0 < \eta^*$, the algorithm solves
\begin{eqnarray}\label{eqn:fig0}
  \max_{\z} \|\z\|_\infty \text{\ s.t. \ } \|Q\z\|_\diamond \leq 1, \|\z\|_1 \leq s\eta_0
\end{eqnarray}
in sub-figure 2. The solution is denoted by the black dot. Although the true domain for the optimization in \eqref{eqn:fig0} is the intersection of the distorted $\ell_\infty$ ball $\{\z: \|Q\z\|_\infty \leq 1\}$ and the $\ell_1$ ball $\{\z: \|\z\|_1 \leq s\eta_0\}$, the intersection of the $\ell_1$ ball (light gray diamond) and the star-shaped region $\S$ forms the effective domain, which is the dark grey region in the sub-figures. To see this, we note the optimal value of the optimization \eqref{eqn:fig0} $\eta_1 = \|\x_1^*\|_\infty = f_s(\eta_0) > \eta_0$ according to 7) of Theorem \ref{thm:fix_fs}, implying that, for the optimal solution $\x_1^*$, we have $\|\x_1^*\|_1/\|\x_1^*\|_\infty \leq \|\x_1^*\|_1/\eta_0 \leq s$. Therefore, the optimal solution $\x_1^*$ can always be found in the dark grey region. In the following sub-figures, at each iteration, we expand the $\ell_1$ ball until we get to the tip point of the star-shaped region $\S$, which is the global optimum.

      Despite of its simplicity, the naive fixed point iteration has two major disadvantages. Firstly, the stopping criterion based on successive improvement is not accurate as it does not reflect the gap between $\eta_t$ and $\eta^*$. This disadvantage can be remedied by starting from both below and above $\eta^*$. The distance between corresponding terms in the two generated sequences is an indication of the gap to the fixed point $\eta^*$. However, the resulting algorithm is generally slow, especially when updating $\eta_{t+1}$ from above $\eta^*$. Secondly, the iteration process is slow when close to the fixed point $\eta^*$. This is because $\rho_1(\epsilon)$ and $\rho_2(\epsilon)$ in 8) of Theorem \ref{thm:fix_fs} are close to 1 for small $\epsilon > 0$.
  \item \textbf{Bisection:} The bisection approach is motivated by property 7) of Theorem \ref{thm:fix_fs}. Starting from an initial interval $(\eta_\mathrm{L}, \eta_\mathrm{U})$ that contains $\eta^*$, we compute $f_s(\eta_\mathrm{M})$ with $\eta_\mathrm{M} = (\eta_{\mathrm{L}} + \eta_\mathrm{U})/2$. As a consequence of property 7), $f_s(\eta_\mathrm{M}) > \eta_\mathrm{M}$ implies $f_s(\eta_\mathrm{M}) < \eta^*$, and we set $\eta_\mathrm{L} = f_s(\eta_\mathrm{M})$; $f_s(\eta_\mathrm{M}) < \eta_\mathrm{M}$ implies $f_s(\eta_\mathrm{M}) > \eta^*$, and we set $\eta_\mathrm{U} = f_s(\eta_\mathrm{M})$. The bisection process can also be accelerated by setting $\eta_\mathrm{L} = f_{s,i}(\eta_\mathrm{M})$ for the first $f_{s,i}(\eta_\mathrm{M})$ greater than $\eta_\mathrm{M}$. The convergence of the bisection approach is much faster than the naive fixed point iteration because each iteration reduces the interval length at least by half. In addition, half the length of the interval is an upper bound on the gap between $\eta_{\mathrm{M}}$ and $\eta^*$, resulting an accurate stopping criterion. However, if the initial $\eta_{\mathrm{U}}$ is too larger than $\eta^*$, the majority of $f_s(\eta_{\mathrm{M}})$ would turn out to be less than $\eta^*$. The verification of $f_s(\eta_{\mathrm{M}}) < \eta_{\mathrm{M}}$ needs solving $n$ linear programs or second-order cone programs, greatly degrading the algorithm's performance.
  \item \textbf{Fixed Point Iteration $+$ Bisection:} The third approach combines the advantages of the bisection method and the fixed point iteration method, at the level of $f_{s,i}(\eta)$. This method relies on the representation $f_s(\eta) = \max_i f_{s,i}(\eta)$ and $\eta^* = \max_i \eta_i^*$.

      Starting from an initial interval $(\eta_{\mathrm{L}0}, \eta_\mathrm{U})$ and the index set $\sI_0 = \{1,\ldots, n\}$, we pick any $i_0 \in \sI_0$ and use the (accelerated) bisection method with starting interval $(\eta_{\mathrm{L}0}, \eta_{\mathrm{U}})$ to find the positive fixed point $\eta_{i_0}^*$ of $f_{s,i_0}(\eta)$. For any $i \in \sI_0/i_0$, $f_{s,i}(\eta_{i_0}^*) \leq \eta_{i_0}^*$ implies that the fixed point $\eta_i^*$ of $f_{s,i}(\eta)$ is less than or equal to $\eta_{i_0}^*$ according to the continuity of $f_{s,i}(\eta)$ and the uniqueness of its positive fixed point. As a consequence, we remove this $i$ from the index set $\sI_0$. We denote $\sI_1$ as the index set after all such $i$s removed, \emph{i.e.,} $\sI_1 = \sI_0/\{i: f_{s,i}(\eta_{i_0}^*) \leq \eta_{i_0}^*\}$. We then set $\eta_{\mathrm{L1}} = \eta_{i_0}^*$ as $\eta^* \geq \eta_{i_0}^*$. Next we test the $i_1 \in \sI_1$ with the \emph{largest} $f_{s,i}(\eta_{i_0}^*)$ and construct $\sI_2$ and $\eta_{\mathrm{L}2}$ in a similar manner. We repeat the process until the index set $\sI_t$ is empty. The $\eta_i^*$ found at the last step is the maximal $\eta_i^*$, which is equal to $\eta^*$.
\end{enumerate}

\vspace{.3cm}
Note that in equations \eqref{eqn:max_inf_Q_diamond}, \eqref{eqn:optimizerho}, and \eqref{def:fs}, if we replace the $\ell_\infty$ norm with any other norm (with some other minor modifications), especially $\|\cdot\|_{s,1}$ or $\|\cdot\|_2$, then a naive fixed point iteration algorithm still exists. In addition, as we did in Corollary \ref{cor:connections}, we can express other norms on the error vector in terms of $\|\cdot\|_{s,1}$ and $\|\cdot\|_2$. We expect the norm $\|\cdot\|_{s,1}$ would yield the tightest performance bounds. Unfortunately, the major problem is that in these cases, the function $f_s(\eta)$ do not admit an obvious polynomial time algorithm to compute. It is very likely the corresponding norm maximization defining $f_s(\eta)$ for $\|\cdot\|_{s,1}$ and $\|\cdot\|_2$ are NP hard \cite{Bodlaender1990Normmaximization}.

\section{Probabilistic Behavior of $\omega_\diamond(Q,s)$}\label{sec:random}
In \cite{tang2011cmsv}, we defined the $\ell_1$-constrained minimal singular value ($\ell_1$-CMSV) as a goodness measure of the sensing matrix and established performance bounds using $\ell_1-$CMSV. For comparison, we include the definition below:
\begin{defi}\label{def:l1cmsv}
For any $s \in [1,n]$ and matrix $A\in \R^{m\times n}$, define the $\ell_1$-constrained minimal singular value (abbreviated as $\ell_1$-CMSV) of $A$ by
\begin{eqnarray}
\rho_s(A) = \min_{\z:\ {\|\z\|_1^2}/{\|\z\|_2^2} \leq s} \frac{\|A\z\|_2}{\|\z\|_2}.
\end{eqnarray}
\end{defi}

Despite the seeming resemblance of the definitions between $\omega_\diamond(Q,s)$, especially $\omega_2(A,s)$, and $\rho_s(A)$, the difference in the $\ell_\infty$ norm and the $\ell_2$ norm has important implications. As shown in Theorem \ref{thm:fix_fs}, the $\ell_\infty$ norm enables the design of optimization procedures with nice convergence properties to efficiently compute $\omega_\diamond(Q,s)$. On the other hand, the $\ell_1$-CMSV yields tight performance bounds at least for a large class of random sensing matrices, as we will see in Theorem \ref{thm:randomcmsv}.

However, there are some interesting connections among these quantities, as shown in the following proposition. These connections allow us the analyze the probabilistic behavior of $\omega_\diamond(Q,s)$ using the results for $\rho_s(A)$ established in \cite{tang2011cmsv}.
\begin{proposition}
\begin{eqnarray}
   \sqrt{s}\sqrt{\omega_\infty(A^TA,s)} \geq \omega_2(A,s) \geq \rho_{s^2}(A).
\end{eqnarray}
\end{proposition}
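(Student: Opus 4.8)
The plan is to establish the two inequalities
$\omega_2(A,s) \geq \rho_{s^2}(A)$ and $\sqrt{s}\sqrt{\omega_\infty(A^TA,s)} \geq \omega_2(A,s)$ separately, in each case combining an elementary norm inequality applied pointwise to the objective with a set-containment relation between the feasible regions. The common observation I would exploit is that both $\omega_2(A,s)$ and $\omega_\infty(A^TA,s)$ minimize over the same cone $F_s := \{\z \neq \zero : \|\z\|_1 \leq s\|\z\|_\infty\}$; since the objectives and the constraint are homogeneous of degree zero, I may normalize $\|\z\|_\infty = 1$, so each is the minimization of a continuous function over a compact set and the minimizers exist.

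For the right inequality I would use the single bound $\|\z\|_\infty \leq \|\z\|_2$, which does two things at once. First, for every $\z$ it gives $\|A\z\|_2/\|\z\|_\infty \geq \|A\z\|_2/\|\z\|_2$, so the $\omega_2$ objective dominates the $\rho$ objective pointwise. Second, if $\z \in F_s$ then $\|\z\|_1/\|\z\|_2 \leq \|\z\|_1/\|\z\|_\infty \leq s$, i.e. $\|\z\|_1^2/\|\z\|_2^2 \leq s^2$, so $F_s$ is contained in the feasible set of $\rho_{s^2}(A)$. Hence, at the minimizer of $\omega_2(A,s)$, the objective is bounded below by $\|A\z\|_2/\|\z\|_2$, which in turn is $\geq \rho_{s^2}(A)$ because that $\z$ is feasible for $\rho_{s^2}$; this yields $\omega_2(A,s) \geq \rho_{s^2}(A)$.

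For the left inequality I would square the claim to the equivalent form $s\,\omega_\infty(A^TA,s) \geq \omega_2(A,s)^2$ and use the H\"older-type identity $\|A\z\|_2^2 = \langle \z, A^TA\z\rangle \leq \|\z\|_1\,\|A^TA\z\|_\infty$. Dividing by $\|\z\|_\infty^2$ gives, for every $\z \in F_s$, the pointwise bound $\|A\z\|_2^2/\|\z\|_\infty^2 \leq (\|\z\|_1/\|\z\|_\infty)(\|A^TA\z\|_\infty/\|\z\|_\infty) \leq s\,\|A^TA\z\|_\infty/\|\z\|_\infty$. The decisive step is to evaluate this chain at the minimizer $\z^\sharp$ of $\omega_\infty(A^TA,s)$ rather than at the minimizer of $\omega_2$: since $\z^\sharp \in F_s$ is feasible for the $\omega_2$ problem, its $\omega_2$ objective is at least $\omega_2(A,s)^2$, while the right-hand side equals $s\,\omega_\infty(A^TA,s)$ exactly. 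Chaining these gives $\omega_2(A,s)^2 \leq \|A\z^\sharp\|_2^2/\|\z^\sharp\|_\infty^2 \leq s\,\omega_\infty(A^TA,s)$, and taking square roots finishes the proof.

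The one thing to get right is the direction of the inequalities, since both $\omega$'s minimize over the same set $F_s$ with different objectives: the chain only closes if I start from the $\omega_\infty$ minimizer, so that $s\,\omega_\infty$ appears as an exact value on the upper side while $\omega_2$ appears only as a lower bound on the lower side; starting from the $\omega_2$ minimizer instead would reverse the crucial step and fail. Everything else is a routine application of $\|\z\|_\infty \leq \|\z\|_2$, H\"older's inequality, and homogeneity, so no duality or deeper compactness argument beyond existence of the minimizers is required.
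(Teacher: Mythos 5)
Your proof is correct and follows essentially the same route as the paper: the bound $\langle \z, A^TA\z\rangle \leq \|\z\|_1\|A^TA\z\|_\infty$ with the normalization $\|\z\|_\infty = 1$ for the left inequality, and the containment $\{\z: \|\z\|_1/\|\z\|_\infty \leq s\} \subseteq \{\z: \|\z\|_1/\|\z\|_2 \leq s\}$ together with $\|\z\|_2 \geq \|\z\|_\infty$ for the right one. Your explicit evaluation at the $\omega_\infty$ minimizer is simply the careful unpacking of the paper's one-line ``taking the minimum over the set'' step (pointwise $f \leq g$ gives $\min f \leq \min g$), and your remark about the direction of that step is accurate.
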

\begin{proof}
For any $\z$ such that $\|\z\|_\infty = 1$ and $\|\z\|_1 \leq s$, we have
\begin{eqnarray}
  \z A^TA\z &\leq& \sum_{i}|\z_i||(A^TA\z)_i|\nn\\
   &\leq & \|\z\|_1\|A^TA\z\|_\infty\nn\\
   &\leq & s\|A^TA\z\|_\infty.
\end{eqnarray}
Taking the minimum over $\{\z: \|\z\|_\infty = 1, \|\z\|_1 \leq s\}$ yields
\begin{eqnarray}
  \omega_2^2(A,s) &\leq& s \omega_\infty(A^TA,s).
\end{eqnarray}
Note that $\|\z\|_1/\|\z\|_\infty \leq s$ implies $\|\z\|_1 \leq s\|\z\|_\infty \leq s \|\z\|_2$, or equivalently,
\begin{eqnarray}
  \{\z: \|\z\|_1/\|\z\|_\infty \leq s \} \subseteqq  \{\z: \|\z\|_1/\|\z\|_2 \leq s \}.
\end{eqnarray}
As a consequence, we have
\begin{eqnarray}
  \omega_2(A,s) &=& \min_{\|\z\|_1/\|\z\|_\infty \leq s}\frac{\|A\z\|_2}{\|\z\|_2} \frac{\|\z\|_2}{\|\z\|_\infty}\nn\\
  &\geq &   \min_{\|\z\|_1/\|\z\|_\infty \leq s}\frac{\|A\z\|_2}{\|\z\|_2}\nn\\
  &\geq &   \min_{\|\z\|_1/\|\z\|_2 \leq s}\frac{\|A\z\|_2}{\|\z\|_2}\nn\\
  & = & \rho_{s^2}(A),
\end{eqnarray}
where the first inequality is due to $\|\z\|_2 \geq \|\z\|_\infty$, and the second inequality is because the minimization is taken over a larger set.
\end{proof}

As a consequence of the theorem we established in \cite{tang2011cmsv} and include below, we derive a condition on the number of measurements to get $\omega_\diamond(Q,s)$ bounded away from zero with high probability for sensing matrices with \emph{i.i.d.} subgaussian and isotropic rows. Note that a random vector $\X \in \R^n$ is called \emph{isotropic and subgaussian} with constant $L$ if $\E|\left<\X, \u\right>|^2 = \|\u\|_2^2$ and $\Pr(|\left<\X, \u\right>| \geq t ) \leq 2 \exp(-t^2/(L\|\u\|_2))$ hold for any $\u \in \R^n$.
\begin{theorem}\emph{\cite{tang2011cmsv}}\label{thm:randomcmsv}
Let the rows of the scaled sensing matrix $\sqrt{m}A$ be \emph{i.i.d.} subgaussian and isotropic random vectors with numerical constant $L$. Then there exist constants $c_1$ and $c_2$ such that for any $\epsilon > 0$ and $m \geq 1$ satisfying
\begin{eqnarray}
  m \geq c_1 \frac{L^2 s \log n}{\epsilon^2},
\end{eqnarray}
we have
\begin{eqnarray}
  \E |1 - \rho_s(A)| \leq \epsilon,
\end{eqnarray}
and
\begin{eqnarray}
  \Pr\{1 - \epsilon \leq \rho_s(A) \leq 1 + \epsilon\} \geq  1 - \exp(-c_2 \epsilon^2m/L^4).
\end{eqnarray}
\end{theorem}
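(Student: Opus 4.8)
The plan is to recognize $\rho_s(A)$ as the minimum of an empirical process indexed by a structured subset of the unit sphere, and to control the uniform deviation of that process from its mean. Writing the rows of $\sqrt{m}A$ as i.i.d. vectors $\X_1,\ldots,\X_m$, one has $\|A\z\|_2^2 = \frac{1}{m}\sum_{j=1}^m \langle \X_j, \z\rangle^2$. Since the constraint $\|\z\|_1/\|\z\|_2 \leq \sqrt{s}$ is scale invariant, I would restrict to the normalized set $T = \{\z \in \R^n: \|\z\|_2 = 1, \|\z\|_1 \leq \sqrt{s}\}$, on which isotropy gives $\E\langle \X_j, \z\rangle^2 = \|\z\|_2^2 = 1$. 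Setting
\[
E = \sup_{\z \in T}\left|\frac{1}{m}\sum_{j=1}^m \langle \X_j, \z\rangle^2 - 1\right|,
\]
we get $\rho_s(A)^2 = \min_{\z \in T}\frac{1}{m}\sum_j \langle \X_j, \z\rangle^2 \in [1 - E, 1 + E]$, and since $|\rho_s - 1| \leq |\rho_s^2 - 1| \leq E$ whenever $\rho_s \geq 0$, both conclusions reduce to showing $\E E \lesssim \epsilon$ and $E \lesssim \epsilon$ with the stated probability.

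For the expectation, I would first apply symmetrization to replace $E$ by the Rademacher average $2\E\sup_{\z \in T}|\frac{1}{m}\sum_j \varepsilon_j \langle \X_j, \z\rangle^2|$, then remove the square through the contraction principle, using the subgaussian tails of $\langle \X_j, \z\rangle$ to control the effective Lipschitz constant after truncating the peaky summands. This leaves a linear Rademacher process whose complexity is governed by the geometry of $T$. The decisive quantity is the Gaussian width $w(T) = \E\sup_{\z \in T}\langle \g, \z\rangle$ of the $\ell_1$-ball of radius $\sqrt{s}$ intersected with the unit sphere, and a standard covering/duality estimate gives $w(T) \lesssim \sqrt{s\log n}$. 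Feeding this into a Mendelson--Pajor--Tomczak-Jaegermann-type bound for quadratic empirical processes yields
\[
\E E \;\lesssim\; L^2\left(\frac{w(T)}{\sqrt{m}} + \frac{w(T)^2}{m}\right) \;\lesssim\; L^2\left(\sqrt{\frac{s\log n}{m}} + \frac{s\log n}{m}\right),
\]
so that choosing $m \geq c_1 L^2 s\log n/\epsilon^2$ forces the dominant first term below $\epsilon$ and establishes the expectation bound.

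The tail bound then follows from concentration of $E$ about its mean. Each summand $\langle \X_j, \z\rangle^2$ is subexponential with $\psi_1$-norm of order $L^2$, so $E$ concentrates with a mixed subgaussian/subexponential tail; invoking Talagrand's concentration inequality for suprema of empirical processes (or a bounded-differences argument applied to the truncated process) gives $\Pr(E \geq \E E + t) \leq \exp(-c m \min(t^2/L^4, t/L^2))$. For $t$ a small multiple of $\epsilon$ the subgaussian regime dominates, producing the claimed $\exp(-c_2\epsilon^2 m/L^4)$; combining with $\E E \leq \epsilon/2$ and the reduction $|\rho_s - 1| \leq E$ completes the argument.

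The main obstacle is the middle step: controlling the quadratic empirical process uniformly over $T$ when the coordinates $\langle \X_j, \z\rangle$ are only subgaussian rather than bounded. Squaring destroys the Lipschitz structure needed for a naive contraction, and the unboundedness rules out a direct bounded-differences estimate, so the argument hinges on either a careful truncation separating the bulk from the heavy-tailed contribution or a generic-chaining bound exploiting the subgaussian increments of $\z \mapsto \langle \X_j, \z\rangle$ directly. Getting the correct powers of $L$ — in particular the $L^4$ in the tail — requires tracking how the $\psi_1$-norm of $\langle \X_j, \z\rangle^2$ propagates through both the symmetrization and the concentration steps.
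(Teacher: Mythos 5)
This theorem is not proved in the present paper at all: it is imported verbatim from \cite{tang2011cmsv}, and the proof there runs exactly along the lines you sketch---reduce $|\rho_s(A)-1|$ to the supremum of the quadratic empirical process over $T=\{\z:\|\z\|_2=1,\ \|\z\|_1\le\sqrt{s}\}$, bound the Gaussian width of $T$ by $\sqrt{s\log n}$, and invoke the Mendelson--Pajor--Tomczak-Jaegermann subgaussian-operator theorem \cite{mendelson2007subgaussian}, which already packages both the expectation bound and the $\exp(-c_2\epsilon^2 m/L^4)$ tail, so the ``main obstacle'' you defer (symmetrization, contraction, truncation of the unbounded squares) is precisely what that cited result resolves as a black box. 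Your proposal is therefore correct and essentially the same argument as the source.
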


\begin{theorem}\label{thm:randomomega}
Under the assumptions and notations of Theorem \ref{thm:randomcmsv}, there exist constants $c_1$ and $c_2$ such that for any $\epsilon > 0$ and $m \geq 1$ satisfying
\begin{eqnarray}\label{eqn:mrandombd}
  m \geq c_1 \frac{L^2s^2 \log n}{\epsilon^2},
\end{eqnarray}
we have
\begin{eqnarray}
  &&\E\ \omega_2(A,s) \geq 1 - \epsilon,\\
  &&\Pr\{\omega_2(A,s) \geq 1 - \epsilon\} \geq  1 - \exp(-c_2 \epsilon^2m),
\end{eqnarray}
and
\begin{eqnarray}
  \hskip -1cm &&\E {\ \omega_\infty(A^TA,s)} \geq \frac{(1-\epsilon)^2}{s},\label{eqn:meanomega_inf}\\
  \hskip -1cm &&\Pr\left\{\omega_\infty(A,s) \geq \frac{(1 - \epsilon)^2}{s}\right\} \geq  1 - \exp(-c_2 \epsilon^2m)\label{eqn:probomega_inf}.
\end{eqnarray}
\end{theorem}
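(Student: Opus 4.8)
The plan is to deduce everything from Theorem \ref{thm:randomcmsv} by means of the deterministic sandwich
$\sqrt{s}\sqrt{\omega_\infty(A^TA,s)} \geq \omega_2(A,s) \geq \rho_{s^2}(A)$
proved in the proposition immediately preceding Theorem \ref{thm:randomcmsv}. The one idea that makes the proof work is to apply Theorem \ref{thm:randomcmsv} \emph{with its sparsity parameter set to $s^2$ rather than $s$}. Indeed, the chain bounds $\omega_2(A,s)$ below by $\rho_{s^2}(A)$, so it is the concentration of $\rho_{s^2}(A)$ near $1$ that we need; and when the parameter in Theorem \ref{thm:randomcmsv} is taken to be $s^2$, its measurement requirement becomes $m \geq c_1 L^2 (s^2)\log n/\epsilon^2$, which is exactly condition \eqref{eqn:mrandombd}. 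So first I would record the two pointwise (per-realization) consequences of the proposition: $\omega_2(A,s) \geq \rho_{s^2}(A)$ and $\omega_\infty(A^TA,s) \geq \omega_2^2(A,s)/s \geq \rho_{s^2}^2(A)/s$.

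Next I would invoke Theorem \ref{thm:randomcmsv} applied to $\rho_{s^2}(A)$, obtaining $\E|1-\rho_{s^2}(A)| \leq \epsilon$ together with $\Pr\{1-\epsilon \leq \rho_{s^2}(A) \leq 1+\epsilon\} \geq 1 - \exp(-c_2\epsilon^2 m/L^4)$. Since $L$ is a fixed numerical constant, the factor $1/L^4$ in the exponent is absorbed into a redefined $c_2$, matching the $\exp(-c_2\epsilon^2 m)$ of the statement. The bounds on $\omega_2(A,s)$ then follow immediately. For the mean, $1 - \E\rho_{s^2}(A) = \E(1-\rho_{s^2}(A)) \leq \E|1-\rho_{s^2}(A)| \leq \epsilon$ gives $\E\rho_{s^2}(A) \geq 1-\epsilon$, whence $\E\omega_2(A,s) \geq \E\rho_{s^2}(A) \geq 1-\epsilon$. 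For the tail, monotonicity of the event gives $\Pr\{\omega_2(A,s) \geq 1-\epsilon\} \geq \Pr\{\rho_{s^2}(A) \geq 1-\epsilon\} \geq \Pr\{1-\epsilon \leq \rho_{s^2}(A) \leq 1+\epsilon\}$, which is at least $1-\exp(-c_2\epsilon^2 m)$.

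For $\omega_\infty(A^TA,s)$ I would use the squared inequality. The tail bound \eqref{eqn:probomega_inf} transfers directly: on the event $\{\omega_2(A,s) \geq 1-\epsilon\}$ one has $\omega_\infty(A^TA,s) \geq (1-\epsilon)^2/s$, so the same probability lower bound is inherited (here the $\omega_\infty(A,s)$ printed in \eqref{eqn:probomega_inf} should read $\omega_\infty(A^TA,s)$). The expectation \eqref{eqn:meanomega_inf} is the only place that needs an extra tool, Jensen's inequality: from $\omega_\infty(A^TA,s) \geq \rho_{s^2}^2(A)/s$ we get $\E\omega_\infty(A^TA,s) \geq \E[\rho_{s^2}^2(A)]/s \geq (\E\rho_{s^2}(A))^2/s \geq (1-\epsilon)^2/s$, the final step using $\E\rho_{s^2}(A) \geq 1-\epsilon \geq 0$.

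Since every step after the substitution $s \mapsto s^2$ is monotonicity of expectation, monotonicity of events, Jensen's inequality, or constant bookkeeping, there is no genuine obstacle. The only things one must get right are the parameter substitution itself, as it is what converts the $s$ of Theorem \ref{thm:randomcmsv} into the $s^2$ of \eqref{eqn:mrandombd}, and the validity of the Jensen step, which requires $1-\epsilon \geq 0$; hence one should work in the regime $\epsilon \in (0,1)$.
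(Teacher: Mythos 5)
Your proposal is correct and is exactly the derivation the paper intends (it states Theorem \ref{thm:randomomega} without a written proof, immediately after the sandwich $\sqrt{s}\sqrt{\omega_\infty(A^TA,s)} \geq \omega_2(A,s) \geq \rho_{s^2}(A)$ and Theorem \ref{thm:randomcmsv} precisely so that the substitution $s \mapsto s^2$ yields condition \eqref{eqn:mrandombd}). Your extra care — the Jensen step with the caveat $\epsilon \in (0,1)$, absorbing $1/L^4$ into $c_2$, and flagging the typo $\omega_\infty(A,s)$ for $\omega_\infty(A^TA,s)$ in \eqref{eqn:probomega_inf} — is all sound.
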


Sensing matrices with \emph{i.i.d.} subgaussian and isotropic rows include the Gaussian ensemble, and the Bernoulli ensemble, as well as the normalized volume measure on various convex symmetric bodies, for example, the unit balls of $\ell_p^n$ for $2 \leq p \leq \infty$ \cite{mendelson2007subgaussian}. In equations \eqref{eqn:meanomega_inf} and \eqref{eqn:probomega_inf}, the extra $s$ in the lower bound of $\omega_\infty(A^TA,s)$ would contribute an $s$ factor in the bounds of Theorem \ref{thm:errorbound}. It plays the same role as the extra $\sqrt{k}$ factor in the error bounds for the Dantzig selector and the LASSO estimator in terms of the RIC and the $\ell_1-$CMSV \cite{Candes2007Dantzig, tang2011cmsv}.

The measurement bound \eqref{eqn:mrandombd} implies that the algorithms for verifying $\omega_\diamond > 0$ and for computing $\omega_\diamond$ work for $s$ at least up to the order $\sqrt{m/\log n}$. The order $\sqrt{m/\log n}$ is complementary to the $\sqrt{m}$ upper bound in Proposition \ref{pro:sqrtmbd}.

Note that Theorem \ref{thm:randomcmsv} implies that the following program:
\begin{eqnarray}
  \max_{\z} \|\z\|_2 \text{\  s.t.\ } A\z = 0, \|\z\|_1 \leq 1,
\end{eqnarray}
verifies the sufficient condition for exact $\ell_1$ recovery for $s$ up to the order $m/\log n$, at least for subgaussian and isotropic random sensing matrices. Unfortunately, this program is NP hard and hence not tractable.

\section{Numerical Experiments}\label{sec:numerical}
In this section, we provide implementation details and numerically assess the performance of the algorithms for solving \eqref{eqn:max_inf_Q_diamond} using the naive fixed point iteration. The numerical implementation and performance of \eqref{eqn:s_star} were previously reported in \cite{tang2011cmsv} and hence are omitted here.  All the numerical experiments in this section were conducted on a desktop computer with a Pentium D CPU@3.40GHz, 2GB RAM, and Windows XP operating system, and the computations were running single-core.

Recall that the optimization defining $f_{s,i}(\eta)$ is
\begin{eqnarray}\label{eqn:nconvex}
  \min \z_i \text{\ s.t.\ } \|Q\z\|_\diamond \leq 1, \|\z\|_1 \leq s\eta.
\end{eqnarray}
Depending on whether $\diamond = 1, \infty$, or $2$, \eqref{eqn:nconvex} is solved using either linear programs or second-order cone programs. For example, when $\diamond = \infty$, we have the following corresponding linear programs:
\begin{eqnarray}\label{eqn:nlinearstandardinf}
\hskip -0.8cm &&\min \left[
       \begin{array}{cc}
         \e_i^T & \zero^T \\
       \end{array}
     \right]
\left[
                                  \begin{array}{c}
                                    \z \\
                                    \u \\
                                  \end{array}
                                \right]\nn\\
\hskip -0.8cm &&\text{\  s.t.\ }
\left[
  \begin{array}{rr}
    Q & \bs O\\
    -Q & \bs O\\
    \I & -\I \\
    -\I & -\I \\
    \zero^T & \bone^T\\
  \end{array}
\right]\left[
                                  \begin{array}{c}
                                    \z \\
                                    \u \\
                                  \end{array}
                                \right] \leq  \left[
                                                     \begin{array}{c}
                                                       \bone\\
                                                       \bone\\
                                                       \zero \\
                                                       \zero \\
                                                       s\eta \\
                                                     \end{array}
                                                   \right], i = 1, \ldots, n,
\end{eqnarray}
These linear programs are implemented using the primal-dual algorithm outlined in Chapter 11 of \cite{boyd2004convex}. The algorithm finds the optimal solution together with optimal dual vectors by solving the Karush-Kuhn-Tucker condition using linearization. The major computation is spent in solving linear systems of equations with positive definite coefficient matrices. When $\diamond = 2$, we rewrite \eqref{eqn:nconvex} as the following second-order cone programs
\begin{eqnarray}\label{eqn:nsocp}
\hskip -0.8cm &&\min \left[
       \begin{array}{cc}
         \e_i^T & \zero^T \\
       \end{array}
     \right]
\left[
                                  \begin{array}{c}
                                    \z \\
                                    \u \\
                                  \end{array}
                                \right]\nn\\
\hskip -0.8cm &&\text{\  s.t.\ } \frac{1}{2}\left(\left\|\left[
  \begin{array}{cc}
    Q & \bs O \\
  \end{array}
\right]\left[
                                  \begin{array}{c}
                                    \z \\
                                    \u \\
                                  \end{array}
                                \right]\right\|_2^2 - 1\right) \leq 0\nn\\
\hskip -0.8cm &&\ \ \ \ \ \left[
  \begin{array}{rr}
     \I & -\I \\
    -\I & -\I \\
    \zero^T & \bone^T\\
  \end{array}
\right]\left[
                                  \begin{array}{c}
                                    \z \\
                                    \u \\
                                  \end{array}
                                \right] \leq  \left[
                                                     \begin{array}{c}
                                                       \zero \\
                                                       \zero \\
                                                       s\eta \\
                                                     \end{array}
                                                   \right].
\end{eqnarray}
We use the log-barrier algorithm described in Chapter 11 of \cite{boyd2004convex} to solve \eqref{eqn:nsocp}. Interested readers are encouraged to refer to \cite{romberg2005l1magic} for a concise exposition of the general primal-dual and log-barrier algorithms and implementation details for similar linear programs and second-order cone programs.

We test the algorithms on Bernoulli, Gaussian, and Hadamard matrices of different sizes. The entries of Bernoulli and Gaussian matrices are randomly generated from the classical Bernoulli distribution with equal probability and the standard Gaussian distribution, respectively.  For Hadamard matrices, first a square Hadamard matrix of size $n$ ($n$ is a power of 2) is generated, then its rows are randomly permuted and its first $m$ rows are taken as an $m\times n$ sensing matrix. All $m\times n$ matrices are normalized to have columns of unit length.

We compare our recovery error bounds based on $\omega_\diamond$ with those based on the RIC. Combining Corollary \ref{cor:connections} and Theorem \ref{thm:errorbound}, we have for the Basis Pursuit
\begin{eqnarray}\label{eqn:bp_omega_bd}
  \|\hx - \x\|_2 \leq \frac{2\sqrt{2k}}{\omega_2(A,2k)}\varepsilon,
\end{eqnarray}
and for the Dantzig selector
\begin{eqnarray}\label{eqn:ds_omega_bd}
  \|\hx - \x\|_2 \leq \frac{2\sqrt{2k}}{\omega_\infty(A^TA,2k)}\mu.
\end{eqnarray}

For comparison, the two RIC bounds are
\begin{eqnarray}\label{eqn:bp_rip_bd}
\|\hx-\x\|_2 \leq \frac{4\sqrt{1+\delta_{2k}(A)}}{1-(1+\sqrt{2})\delta_{2k}(A)}\varepsilon,
\end{eqnarray}
for the Basis Pursuit, assuming $\delta_{2k}(A) < \sqrt{2}-1$ \cite{Candes2008RIP}, and
\begin{eqnarray}\label{eqn:ds_rip_bd}
        \|\hx-\x\|_2 \leq \frac{4\sqrt{k}}{1-\delta_{2k}(A)-\delta_{3k}(A)}\mu,
      \end{eqnarray}
for the Dantzig selector, assuming $\delta_{2k}(A) + \delta_{3k}(A) < 1$ \cite{Candes2007Dantzig}. Without loss of generality, we set $\varepsilon = 1$ and $\mu = 1$.

The RIC is computed using Monte Carlo simulations. More explicitly, for $\delta_{2k}(A)$, we randomly take $1000$ sub-matrices of $A \in \R^{m\times n}$ of size $m\times 2k$, compute the maximal and minimal singular values $\sigma_1$ and $\sigma_{2k}$, and approximate $\delta_{2k}(A)$ using the maximum of $\max(\sigma_1^2 - 1, 1 - \sigma_{2k}^2)$ among all sampled sub-matrices. Obviously, the approximated RIC is always smaller than or equal to the exact RIC. As a consequence, the performance bounds based on the exact RIC are \emph{worse} than those based on the approximated RIC. Therefore, in cases where our $\omega_\diamond$ based bounds are better (tighter, smaller) than the approximated RIC bounds, they are even better than the exact RIC bounds.

In Tables \ref{tbl:BernoulliBPRicOmega}, \ref{tbl:HadamardBPRicOmega}, and \ref{tbl:GaussianBPRicOmega}, we compare the error bounds \eqref{eqn:bp_omega_bd} and \eqref{eqn:bp_rip_bd} for the Basis Pursuit algorithm. In the tables, we also include $s_*$ computed by \eqref{eqn:max_inf_Q_1}, and $k_* = \lfloor s_*/2 \rfloor$, \emph{i.e.}, the maximal sparsity level such that the sufficient and necessary condition \eqref{nullspaceproperty} holds. The number of measurements is taken as $m = \lfloor\rho n\rfloor, \rho = 0.2, 0.3, \ldots, 0.8$. 
Note the blanks mean that the corresponding bounds are not valid. For the Bernoulli and Gaussian matrices, the RIC bounds work only for $k \leq 2$, even with $m = \lfloor 0.8n\rfloor$, while the $\omega_2(A,2k)$ bounds work up until $k = 9$. Both bounds are better for Hadamard matrices. For example, when $m = 0.5n$, the RIC bounds are valid for $k \leq 3$, and our bounds hold for $k \leq 5$. In all cases for $n = 256$, our bounds are smaller than the RIC bounds.

\begin{table}
\caption{Comparison of the $\omega_2$ based bounds and the RIC based bounds on the $\ell_2$ norms of the errors of the Basis Pursuit algorithm for a Bernoulli matrix with leading dimension $n = 256$.}
\begin{center}
\hskip -.3 cm

\begin{tabular}{||l|l||l|l|l|l|l|l|l|}
\cline{2-9}
\multicolumn{1}{l|}{\multirow{2}{*}{}} & $m$ & 51 & 77 & 102 & 128 & 154 & 179 & 205\\
\cline{2-9}
\multicolumn{1}{l|}{}& $s_*$ & 4.6 & 6.1 & 7.4& 9.6 & 12.1& 15.2& 19.3\\
\hline
$k$ & $k_*$ & 2 & 3 & 3 & 4 & 6 & 7 & 9\\
\hline\hline

\multirow{2}{*}{1} & $\omega$ bd & 4.2 & 3.8 & 3.5 & 3.4 & 3.3 & 3.2 & 3.2\\
& ric bd & & & 23.7 & 16.1 & 13.2 & 10.6 & 11.9\\
\hline

\multirow{2}{*}{2} & $\omega$ bd & 31.4 & 12.2 & 9.0 & 7.4 & 6.5 & 6.0 & 5.6\\
& ric bd & & & & & & 72.1 & 192.2\\
\hline

\multirow{2}{*}{3} & $\omega$ bd &  & 252.0 & 30.9 & 16.8 & 12.0 & 10.1 & 8.9\\
& ric bd & & & & & & & \\
\hline

\multirow{2}{*}{4} & $\omega$ bd & \multicolumn{3}{l|}{} & 52.3 &   23.4 &   16.5  &  13.6\\
& ric bd & \multicolumn{3}{l|}{} & & & & \\
\hline

\multirow{2}{*}{5} & $\omega$ bd & \multicolumn{4}{l|}{} & 57.0  &  28.6 &   20.1\\
& ric bd & \multicolumn{4}{l|}{} & & & \\
\hline

\multirow{2}{*}{6} & $\omega$ bd & \multicolumn{4}{l|}{} & 1256.6 &   53.6 &   30.8\\
& ric bd & \multicolumn{4}{l|}{} & & & \\
\hline

\multirow{2}{*}{7} & $\omega$ bd & \multicolumn{5}{l|}{} & 161.6  &  50.6\\
& ric bd & \multicolumn{5}{l|}{} & & \\
\hline

\multirow{2}{*}{8} & $\omega$ bd & \multicolumn{6}{l|}{} & 93.1\\
& ric bd & \multicolumn{6}{l|}{} & \\
\hline

\multirow{2}{*}{9} & $\omega$ bd & \multicolumn{6}{l|}{} & 258.7\\
& ric bd & \multicolumn{6}{l|}{} & \\
\hline
\end{tabular}\label{tbl:BernoulliBPRicOmega}
\end{center}
\end{table}

\begin{table}
\caption{Comparison of the $\omega_2$ based bounds and the RIC based bounds on the $\ell_2$ norms of the errors of the Basis Pursuit algorithm for a Hadamard matrix with leading dimension $n = 256$.}
\begin{center}
\hskip -.4 cm
\begin{tabular}{||l|l||l|l|l|l|l|l|l|}
\cline{2-9}
\multicolumn{1}{l|}{\multirow{2}{*}{}} & $m$ & 51 & 77 & 102 & 128 & 154 & 179 & 205\\
\cline{2-9}
\multicolumn{1}{l|}{}& $s_*$ & 5.4 & 7.1 & 9.1 & 11.4 & 14.0 & 18.4 & 25.3 \\
\hline
$k$ & $k_*$ & 2 & 3 & 4 & 5 & 6 & 9 & 12\\
\hline\hline

\multirow{2}{*}{1} & $\omega$ bd & 3.8 & 3.5 & 3.3 & 3.2 & 3.1 & 3.0 & 3.0\\
& ric bd & 46.6 &   13.2  &   9.2   &  9.4 &    8.3   &  6.2 &    5.2\\
\hline

\multirow{2}{*}{2} & $\omega$ bd & 13.7  &   8.4  &   6.7 &    5.9  &   5.4  &   4.9  &   4.6\\
& ric bd & & & 46.6  &  24.2  &  15.3  &   8.6   &  7.1\\
\hline

\multirow{2}{*}{3} & $\omega$ bd &  & 30.9 &   14.0 &   10.1  &   8.4  &   7.1 &    6.3\\
& ric bd & & & & 1356.6   & 25.4 &   10.3  &   8.8\\
\hline

\multirow{2}{*}{4} & $\omega$ bd & \multicolumn{2}{l|}{} & 47.4 &   18.9 &   13.2  &   9.9   &  8.1\\
& ric bd & \multicolumn{2}{l|}{} & & & 40.0 &   14.0 &   10.2\\
\hline

\multirow{2}{*}{5} & $\omega$ bd & \multicolumn{3}{l|}{} & 51.5   & 22.6 &   13.8 &   10.3\\
& ric bd & \multicolumn{3}{l|}{} & & & 18.8  &  11.6\\
\hline

\multirow{2}{*}{6} & $\omega$ bd & \multicolumn{4}{l|}{} & 50.8 &   20.1  &  13.1\\
& ric bd & \multicolumn{4}{l|}{} & & 42.5  &  15.9\\
\hline

\multirow{2}{*}{7} & $\omega$ bd & \multicolumn{5}{l|}{} & 31.8  &  16.7\\
& ric bd & \multicolumn{5}{l|}{}& 94.2 &   19.7\\
\hline

\multirow{2}{*}{8} & $\omega$ bd & \multicolumn{5}{l|}{} & 63.5  &  21.7\\
& ric bd & \multicolumn{5}{l|}{}& 1000.0 &   24.6\\
\hline

\multirow{2}{*}{9} & $\omega$ bd & \multicolumn{5}{l|}{} & 449.8  &  29.4\\
& ric bd & \multicolumn{5}{l|}{}& & 39.1\\
\hline

\multirow{2}{*}{10} & $\omega$ bd & \multicolumn{6}{l|}{} & 42.8\\
& ric bd & \multicolumn{6}{l|}{}& 35.6\\
\hline

\multirow{2}{*}{11} & $\omega$ bd & \multicolumn{6}{l|}{} & 72.7\\
& ric bd & \multicolumn{6}{l|}{}& 134.1\\
\hline

\multirow{2}{*}{12} & $\omega$ bd & \multicolumn{6}{l|}{} & 195.1\\
& ric bd & \multicolumn{6}{l|}{}& \\
\hline
\end{tabular}\label{tbl:HadamardBPRicOmega}
\end{center}
\end{table}

\begin{table}
\caption{Comparison of the $\omega_2$ based bounds and the RIC based bounds on the $\ell_2$ norms of the errors of the Basis Pursuit algorithm for a Gaussian matrix with leading dimension $n = 256$.}
\begin{center}
\hskip -.3 cm

\begin{tabular}{||l|l||l|l|l|l|l|l|l|}
\cline{2-9}
\multicolumn{1}{l|}{\multirow{2}{*}{}} & $m$ & 51 & 77 & 102 & 128 & 154 & 179 & 205\\
\cline{2-9}
\multicolumn{1}{l|}{}& $s_*$ & 4.6 & 6.2 & 8.1& 9.9 & 12.5& 15.6& 20.0\\
\hline
$k$ & $k_*$ & 2 & 3 & 4 & 4 & 6 & 7 & 10\\
\hline\hline

\multirow{2}{*}{1} & $\omega$ bd & 4.3 & 3.7 & 3.5 & 3.4 & 3.3 & 3.2 & 3.2\\
& ric bd & & & 26.0 & 14.2 & 10.0 & 10.9 & 12.1\\
\hline

\multirow{2}{*}{2} & $\omega$ bd & 34.3 & 12.3 & 8.3 & 7.0 & 6.4 & 5.9 & 5.6\\
& ric bd & & & & & & 47.1 & 27.6\\
\hline

\multirow{2}{*}{3} & $\omega$ bd &  & 197.4 & 23.4 & 14.5 & 11.6 & 9.8 & 8.9\\
& ric bd & & & & & & & \\
\hline

\multirow{2}{*}{4} & $\omega$ bd & \multicolumn{2}{l|}{} &1036.6& 39.6 &   21.7 &   15.9  &  13.4\\
& ric bd & \multicolumn{2}{l|}{}& & & & & \\
\hline

\multirow{2}{*}{5} & $\omega$ bd & \multicolumn{4}{l|}{} & 49.3  &  26.4 &   20.0\\
& ric bd & \multicolumn{4}{l|}{} & & & \\
\hline

\multirow{2}{*}{6} & $\omega$ bd & \multicolumn{4}{l|}{} & 284.2 &   48.8 &   31.2\\
& ric bd & \multicolumn{4}{l|}{} & & & \\
\hline

\multirow{2}{*}{7} & $\omega$ bd & \multicolumn{5}{l|}{} & 129.1  &  48.1\\
& ric bd & \multicolumn{5}{l|}{} & & \\
\hline

\multirow{2}{*}{8} & $\omega$ bd & \multicolumn{6}{l|}{} & 185.5\\
& ric bd & \multicolumn{6}{l|}{} & \\
\hline

\multirow{2}{*}{9} & $\omega$ bd & \multicolumn{6}{l|}{} & 9640.3\\
& ric bd & \multicolumn{6}{l|}{} & \\
\hline
\end{tabular}\label{tbl:GaussianBPRicOmega}
\end{center}
\end{table}

We next compare the error bounds \eqref{eqn:ds_omega_bd} and \eqref{eqn:ds_rip_bd} for the Dantzig selector. For the Bernoulli and Gaussian matrices, our bounds work for wider ranges of $(k,m)$ pairs and are tighter in all tested cases. For the Hadamard matrices, the RIC bounds are better, starting from $k \geq 5$ or $6$. We expect that this indicates a general trend, namely, when $k$ is relatively small, the $\omega$ based bounds are better, while when $k$ is large, the RIC bounds are tighter. This was suggested by the probabilistic analysis of $\omega$ in Section \ref{sec:random}. The reason is that when $k$ is relatively small, both the relaxation $\|\x\|_1 \leq 2k \|\x\|_\infty$ on the sufficient and necessary condition \eqref{nullspaceproperty} and the relaxation $\|\hx - \x\|_2 \leq \sqrt{2k} \|\hx - \x\|_\infty$ are sufficiently tight.

\begin{table}
\caption{Comparison of the $\omega_\infty$ based bounds and the RIC based bounds on the $\ell_2$ norms of the errors of the Dantzig selector algorithm for the Bernoulli matrix used in Table \ref{tbl:BernoulliBPRicOmega}.}
\begin{center}
\hskip 0 cm
\begin{tabular}{||l|l||l|l|l|l|l|l|l|}
\cline{2-9}
\multicolumn{1}{l|}{\multirow{2}{*}{}} & $m$ & 51 & 77 & 102 & 128 & 154 & 179 & 205\\
\cline{2-9}
\multicolumn{1}{l|}{}& $s_*$ & 4.6 & 6.1 & 7.4& 9.6 & 12.1& 15.2& 19.3\\
\hline
$k$ & $k_*$ & 2 & 3 & 3 & 4 & 6 & 7 & 9\\
\hline\hline

\multirow{2}{*}{1} & $\omega$ bd & 6.0 & 5.4 & 4.8 & 4.4 & 4.2 & 4.1 & 4.1\\
& ric bd & & 46.3 & 17.4 & 12.1 & 11.2 & 10.3 & 8.6\\
\hline

\multirow{2}{*}{2} & $\omega$ bd & 102.8 & 38.4 & 29.0 & 18.5 & 14.1 & 12.8 & 11.9\\
& ric bd & & & & & & 47.2 & 22.5\\
\hline

\multirow{2}{*}{3} & $\omega$ bd &  & 1477.2 & 170.2 & 81.2 & 57.0 & 41.1 & 32.6\\
& ric bd & & & & & & & \\
\hline

\multirow{2}{*}{4} & $\omega$ bd & \multicolumn{3}{l|}{} & 522.7 &   194.6 &   128.9  &  89.0\\
& ric bd & \multicolumn{3}{l|}{} & & & & \\
\hline

\multirow{2}{*}{5} & $\omega$ bd & \multicolumn{4}{l|}{} & 768.7  &  323.6 &   203.2\\
& ric bd & \multicolumn{4}{l|}{} & & & \\
\hline

\multirow{2}{*}{6} & $\omega$ bd & \multicolumn{4}{l|}{} & 24974.0 &   888.7 &   489.0\\
& ric bd & \multicolumn{4}{l|}{} & & & \\
\hline

\multirow{2}{*}{7} & $\omega$ bd & \multicolumn{5}{l|}{} & 3417.3  &  1006.9\\
& ric bd & \multicolumn{5}{l|}{} & & \\
\hline

\multirow{2}{*}{8} & $\omega$ bd & \multicolumn{6}{l|}{} & 2740.0\\
& ric bd & \multicolumn{6}{l|}{} & \\
\hline

\multirow{2}{*}{9} & $\omega$ bd & \multicolumn{6}{l|}{} & 10196.9\\
& ric bd & \multicolumn{6}{l|}{} & \\
\hline
\end{tabular}\label{tbl:BernoulliDantzigRicOmega}
\end{center}
\end{table}

\begin{table}
\caption{Comparison of the $\omega_\infty$ based bounds and the RIC based bounds on the $\ell_2$ norms of the errors of the Dantzig selector algorithm for the Hadamard matrix used in Table \ref{tbl:HadamardBPRicOmega}.}
\begin{center}
\hskip -1.2 cm
\begin{tabular}{||l|l||l|l|l|l|l|l|l|}
\cline{2-9}
\multicolumn{1}{l|}{\multirow{2}{*}{}} & $m$ & 51 & 77 & 102 & 128 & 154 & 179 & 205\\
\cline{2-9}
\multicolumn{1}{l|}{}& $s_*$ & 5.2 & 6.9 & 9.1 & 12.1 & 14.4 & 18.3 & 25.2 \\
\hline
$k$ & $k_*$ & 2 & 3 & 4 & 6 & 7 & 9 & 12\\
\hline\hline

\multirow{2}{*}{1} & $\omega$ bd & 4.8 & 4.0 & 3.8 & 3.4 & 3.4 & 3.2 & 3.1\\
& ric bd & &   15.6  &   9.3   &  7.0 &    6.3   &  5.8 &    5.1\\
\hline

\multirow{2}{*}{2} & $\omega$ bd & 50.9  &   16.2  &   10.1 &    7.1  &   7.0  &   6.1 &   5.3\\
& ric bd & & & 45.3  &  16.6  &  13.7  &  10.6   &  8.8\\
\hline

\multirow{2}{*}{3} & $\omega$ bd &  & 108.2 &   30.7 &   14.3  &   13.9  &  10.0 &    8.0\\
& ric bd & & & 1016.4 & 29.9   & 24.9 &   15.8  &   12.5\\
\hline

\multirow{2}{*}{4} & $\omega$ bd & \multicolumn{2}{l|}{} & 150.7 &   35.3 &   29.3  &  16.8   &  11.7\\
& ric bd & \multicolumn{2}{l|}{} & & 126.4 & 38.7 &   24.2 &   16.6\\
\hline

\multirow{2}{*}{5} & $\omega$ bd & \multicolumn{3}{l|}{} & 108.5   & 64.2 &   31.4 &   17.3\\
& ric bd & \multicolumn{3}{l|}{} & & 187.3 & 30.0  &  22.1\\
\hline

\multirow{2}{*}{6} & $\omega$ bd & \multicolumn{3}{l|}{} &3168.9 & 171.5 &  59.7  &  25.3\\
& ric bd & \multicolumn{3}{l|}{} & & 112.0 & 53.1  & 26.8\\
\hline

\multirow{2}{*}{7} & $\omega$ bd & \multicolumn{4}{l|}{} & 1499.5 & 116.3  &  38.8\\
& ric bd & \multicolumn{4}{l|}{} & 411.7 & 71.3 & 34.7\\
\hline

\multirow{2}{*}{8} & $\omega$ bd & \multicolumn{5}{l|}{} & 265.3  &  61.4\\
& ric bd & \multicolumn{5}{l|}{}& 95.4 &   47.6\\
\hline

\multirow{2}{*}{9} & $\omega$ bd & \multicolumn{5}{l|}{} & 2394.0 &   96.0\\
& ric bd & \multicolumn{5}{l|}{}& 198.7  &  61.9\\
\hline

\multirow{2}{*}{10} & $\omega$ bd & \multicolumn{6}{l|}{} & 157.4\\
& ric bd & \multicolumn{6}{l|}{}& 82.9\\
\hline

\multirow{2}{*}{11} & $\omega$ bd & \multicolumn{6}{l|}{} & 296.4\\
& ric bd & \multicolumn{6}{l|}{}& 130.3\\
\hline

\multirow{2}{*}{12} & $\omega$ bd & \multicolumn{6}{l|}{} & 898.2\\
& ric bd & \multicolumn{6}{l|}{}& 201.2 \\
\hline
\end{tabular}\label{tbl:HadamardDantzigRicOmega}
\end{center}
\end{table}

\begin{table}
\caption{Comparison of the $\omega_\infty$ based bounds and the RIC based bounds on the $\ell_2$ norms of the errors of the Dantzig selector algorithm for the Gaussian matrix used in Table \ref{tbl:GaussianBPRicOmega}.}
\begin{center}
\hskip -1.2 cm
\begin{tabular}{||l|l||l|l|l|l|l|l|l|}
\cline{2-9}
\multicolumn{1}{l|}{\multirow{2}{*}{}} & $m$ & 51 & 77 & 102 & 128 & 154 & 179 & 205\\
\cline{2-9}
\multicolumn{1}{l|}{}& $s_*$ & 4.6 & 6.2 & 8.1& 9.9 & 12.5& 15.6& 20.0\\
\hline
$k$ & $k_*$ & 2 & 3 & 4 & 4 & 6 & 7 & 10\\
\hline\hline

\multirow{2}{*}{1} & $\omega$ bd & 6.5 & 5.1 & 4.8 & 4.3 & 4.2 & 4.0 & 3.9\\
& ric bd & &30.0 & 18.0 & 14.6 & 9.7 & 9.3 & 9.1\\
\hline

\multirow{2}{*}{2} & $\omega$ bd & 119.4 & 37.8 & 22.5 & 17.6 & 14.1 & 12.7 & 11.4\\
& ric bd & & & & & 91.5 & 44.4 & 23.5\\
\hline

\multirow{2}{*}{3} & $\omega$ bd &  & 1216.7 & 120.7 & 67.3 & 53.6 & 38.7 & 36.4\\
& ric bd & & & & & & & 2546.6\\
\hline

\multirow{2}{*}{4} & $\omega$ bd & \multicolumn{2}{l|}{} &4515.9& 318.2 &  168.4 &   115.8  &  109.0\\
& ric bd & \multicolumn{2}{l|}{}& & & & & \\
\hline

\multirow{2}{*}{5} & $\omega$ bd & \multicolumn{4}{l|}{} & 663.6  &  292.4 &   247.8\\
& ric bd & \multicolumn{4}{l|}{} & & & \\
\hline

\multirow{2}{*}{6} & $\omega$ bd & \multicolumn{4}{l|}{} & 5231.4 &   764.3 &   453.5\\
& ric bd & \multicolumn{4}{l|}{} & & & \\
\hline

\multirow{2}{*}{7} & $\omega$ bd & \multicolumn{5}{l|}{} & 2646.4  &  1087.7\\
& ric bd & \multicolumn{5}{l|}{} & & \\
\hline

\multirow{2}{*}{8} & $\omega$ bd & \multicolumn{6}{l|}{} & 2450.5\\
& ric bd & \multicolumn{6}{l|}{} & \\
\hline

\multirow{2}{*}{9} & $\omega$ bd & \multicolumn{6}{l|}{} & 6759.0\\
& ric bd & \multicolumn{6}{l|}{} & \\
\hline
\end{tabular}\label{tbl:GaussianDantzigRicOmega}
\end{center}
\end{table}

In Table \ref{tbl:timeomega} we present the execution times for computing different $\omega$. For random matrices with leading dimension $n = 256$, the algorithm generally takes 1 to 3 minutes to compute either $\omega_2(A,s)$ or $\omega_\infty(A^TA,s)$.
\begin{table}
\caption{Time in seconds taken to compute $\omega_2(A,\cdot)$ and $\omega_\infty(A^TA,\cdot)$ for Bernoulli, Hadamard, and Gaussian matrices}
\begin{center}
\hskip 0 cm
\begin{tabular}{||l|l|l|l|l|l|l|l|l|l|}
\hline
$k$ & type & $m$ & 51 & 77 & 102 & 128 & 154 & 179 & 205\\
\hline

\multirow{6}{*}{1} & \multirow{2}{*}{Bernoulli} & $\omega_2$ & 118  &    84  &   133  &    87  &   133   &  174  &   128\\
 & & $\omega_\infty$ & 75  &    81   &   84   &   65  &    63  &   144  &   151\\
\cline{2-10}

 & \multirow{2}{*}{Hadamard} & $\omega_2$ & 84  &    82   &   82   &   82  &    80 &     79  &    79\\
 & & $\omega_\infty$ & 57   &   55  &    58  &    58 &     58   &   58  &    57\\
\cline{2-10}
 & \multirow{2}{*}{Gaussian} & $\omega_2$ & 82   &   84   &  212  &   106 &    156  &   185  &   104\\
 & & $\omega_\infty$ & 69  &    65  &    72  &   102   &   81  &   104  &    72\\
\hline

%
%

\multirow{6}{*}{3} & \multirow{2}{*}{Bernoulli} & $\omega_2$ &   &   155  &    96   &   95  &    97  &    97  &   131\\
 & & $\omega_\infty$ &   &    300  &    228 &     190  &    125  &    135  &    196\\
\cline{2-10}

 & \multirow{2}{*}{Hadamard} & $\omega_2$ &   &   91  &    88  &    87  &    88  &    74   &   72\\
 & & $\omega_\infty$ &    &   84   &   83  &    77  &    92  &   102   &   70\\
\cline{2-10}

 & \multirow{2}{*}{Gaussian} & $\omega_2$ &   &  134  &   168  &   115  &    95  &    96   &  100\\
 & & $\omega_\infty$ &    &  137  &   142  &   125   &  165   &  145   &  105\\
\hline

\multirow{6}{*}{5} & \multirow{2}{*}{Bernoulli} & $\omega_2$ & \multicolumn{3}{l}{} & &  97  &   111  &    97\\
 & & $\omega_\infty$ &  \multicolumn{3}{l}{} & & 156   &   81  &   107\\
\cline{2-10}

 & \multirow{2}{*}{Hadamard} & $\omega_2$ & \multicolumn{3}{l|}{} & 87   &   85   &   85   &   81\\
 & & $\omega_\infty$ & \multicolumn{3}{l|}{}  & 75   &   74   &   75   &   75\\
\cline{2-10}

 & \multirow{2}{*}{Gaussian} & $\omega_2$ & \multicolumn{3}{l}{} & &  98  &   105   &   96\\
 & & $\omega_\infty$ & \multicolumn{3}{l}{} & & & &     193\\
\hline

\multirow{6}{*}{7} & \multirow{2}{*}{Bernoulli} & $\omega_2$ & \multicolumn{4}{l|}{} & & 164 & 104\\
 & & $\omega_\infty$ & \multicolumn{4}{l|}{} & & 178 & 85\\
\cline{2-10}

 & \multirow{2}{*}{Hadamard} & $\omega_2$ & \multicolumn{4}{l|}{} & & 82 & 77\\\
 & & $\omega_\infty$ & \multicolumn{4}{l|}{} & 134 & 71 & 65\\
\cline{2-10}

 & \multirow{2}{*}{Gaussian} & $\omega_2$ & \multicolumn{4}{l|}{} & & 106 & 105\\
 & & $\omega_\infty$ & \multicolumn{4}{l|}{} & & & 193\\
\hline

\end{tabular}\label{tbl:timeomega}
\end{center}
\end{table}


In the last set of experiments, we compute $\omega_2(A,2k)$ and $\omega_\infty(A^TA,2k)$ for a Gaussian matrix and a Hadamard matrix, respectively, with leading dimension $n = 512$. The row dimensions of the sensing matrices range over $m = \lfloor \rho n \rfloor$ with $\rho = 0.2, 0.3, \ldots, 0.8$.
In Figure \ref{fig:GaussianRICvsOmeBD}, we compare the $\ell_2$ norm error bounds of the Basis Pursuit using $\omega_2(A,2k)$ and the RIC. The color indicates the values of the error bounds. We remove all bounds that are greater than 50 or are not valid. Hence, all white areas indicate that the bounds corresponding to $(k,m)$ pairs that are too large or not valid. The left sub-figure is based on $\omega_2(A,2k)$ and the right sub-figure is based on the RIC. We observe that the $\omega_2(A,2k)$ based bounds apply to a wider range of $(k,m)$ pairs.
\begin{figure}
\hskip -0cm
\centering
\includegraphics[width = 0.7\textwidth, trim = 0mm 0mm 0mm 0mm, clip]{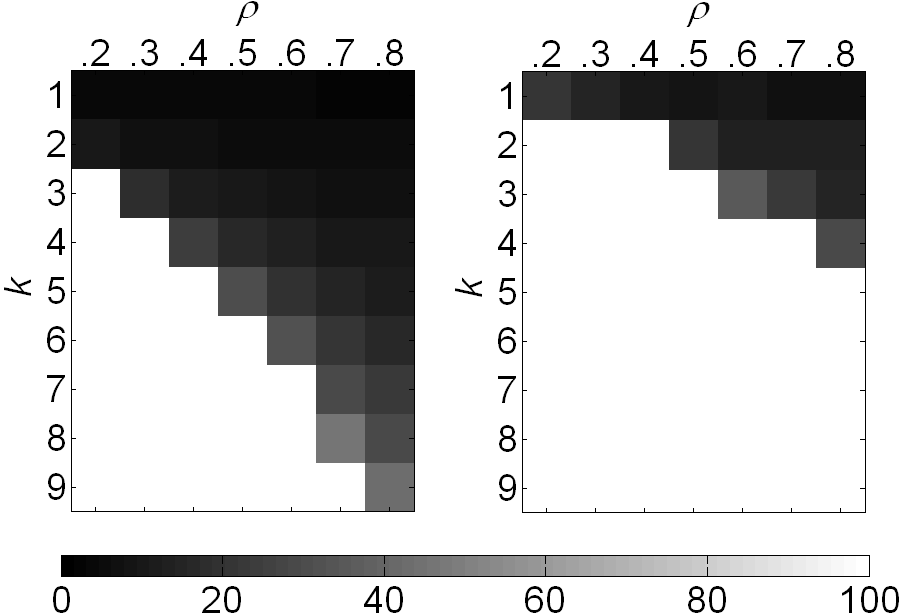}
\caption{$\omega_2(A,2k)$ based bounds v.s. RIC based bounds on the $\ell_2$ norms of the errors for a Gaussian matrix with leading dimension $n = 512$. Left: $\omega_2(A,2k)$ based bounds; Right: RIC based bounds.}
\label{fig:GaussianRICvsOmeBD}
\end{figure}%

In Figure \ref{fig:HadamardRICvsOmeBD}, we conduct the same experiment as in Figure \ref{fig:GaussianRICvsOmeBD} for a Hadamard matrix and the Dantzig selector. We observe that for the Hadamard matrix, the RIC gives better performance bounds. This result coincides with the one we obtained in Table \ref{tbl:HadamardDantzigRicOmega}.

The average time for computing each $\omega_2(A,2k)$ and $\omega_\infty(A^TA,2k)$ was around 15 minutes.
\begin{figure}
\hskip -0cm
\centering
\includegraphics[width = 0.7\textwidth, trim = 0mm 0mm 0mm 0mm, clip]{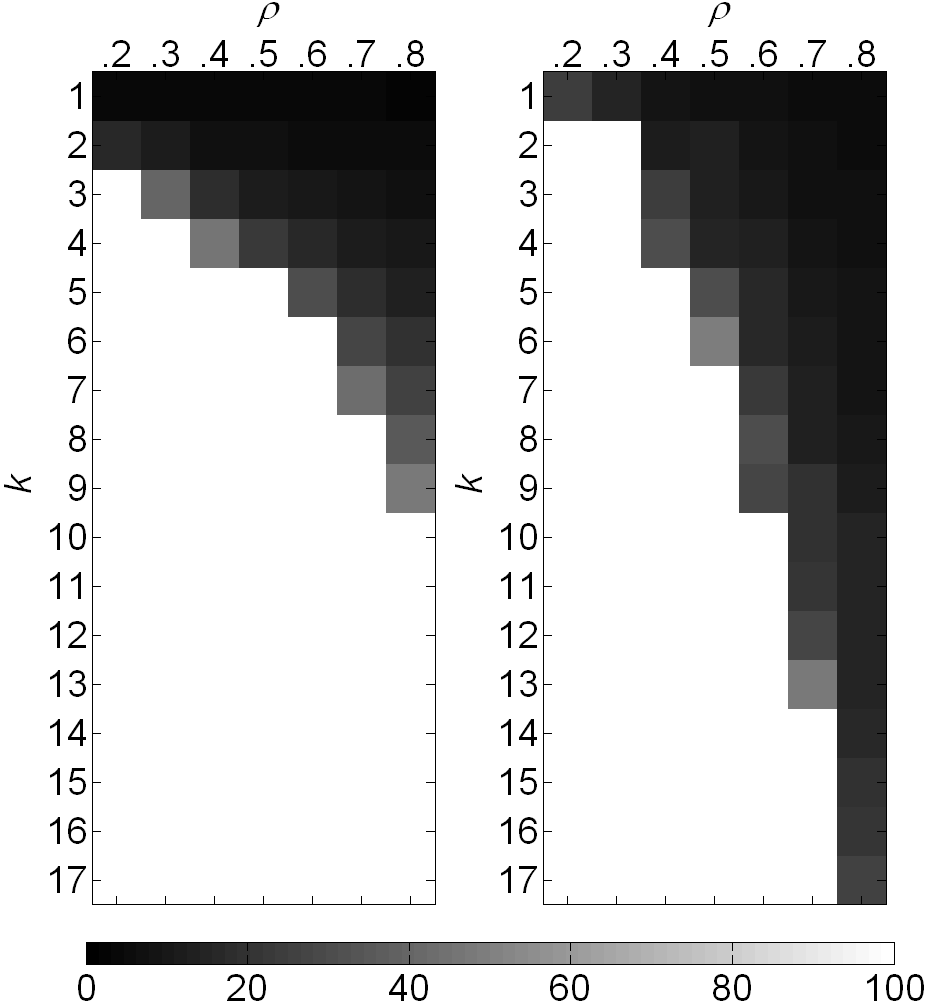}
\caption{$\omega_\infty(A^TA,2k)$ based bounds v.s. RIC based bounds on the $\ell_2$ norms of the errors for a Hadamard matrix with leading dimension $n = 512$. Left: $\omega_2(A,2k)$ based bounds; Right: RIC based bounds}
\label{fig:HadamardRICvsOmeBD}
\end{figure}%
\section{Conclusions}\label{sec:conclusions}
In this paper, we analyzed the performance of $\ell_1$ sparse signal recovery algorithms using the $\ell_\infty$ norm of the errors as a performance criterion. We expressed other popular performance criteria in terms of the $\ell_\infty$ norm. A family of goodness measures of the sensing matrices was defined using optimization procedures. We used these goodness measures to derive upper bounds on the $\ell_\infty$ norms of the reconstruction errors for the Basis Pursuit, the Dantzig selector, and the LASSO estimator. Polynomial-time algorithms with established convergence properties were implemented to efficiently solve the optimization procedures defining the goodness measures. We expect that these goodness measures will be useful in comparing different sensing systems and recovery algorithms, as well as in designing optimal sensing matrices. In future work, we will use these computable performance bounds to optimally design $k-$space sample trajectories for MRI and to optimally design transmitting waveforms for compressive sensing radar.

\section{Appendix: Proofs}
\subsection{Proof of Proposition \ref{pro:errorcharacteristics}}\label{app:pf:errorcharacteristics}

\begin{proof}[Proof of Proposition \ref{pro:errorcharacteristics}] Suppose $S = \supp(\x)$ and $|S| = \|\x\|_0 = k$. Define the error vector $\h = \hx - \x$. For any vector $\z \in \R^n$ and any index set $S \subseteq \{1,\ldots,n\}$, we use $\z_S \in \R^{|S|}$ to represent the vector whose elements are those of $\z$ indicated by $S$.\\

We first deal with the Basis Pursuit and the Dantzig selector. As observed by Cand\'es in \cite{Candes2008RIP},  the fact that $\|\hx\|_1 = \|\x + \h\|_1$ is the minimum among all $\z$s satisfying the constraints in \eqref{bp} and \eqref{ds}, together with the fact that the true signal $\x$ satisfies the constraints as required by the conditions imposed on the noise in Proposition \ref{pro:errorcharacteristics}, imply that $\|\h_{S^c}\|_1$ cannot be very large. To see this, note that
\begin{eqnarray}\label{x_min}
  \|\x\|_1 &\geq& \|\x + \h\|_1\nonumber\\
  & = & \sum_{i\in S}|\x_i + \h_i| + \sum_{i\in S^c}|\x_i + \h_i| \nonumber\\
  &\geq& \|\x_S\|_1 - \|\h_S\|_1 + \|\h_{S^c}\|_1\nonumber\\
  & = & \|\x\|_1 - \|\h_S\|_1 + \|\h_{S^c}\|_1.
\end{eqnarray}
Therefore, we obtain $\|\h_{S}\|_1 \geq \|\h_{S^c}\|_1$, which leads to
\begin{eqnarray}\label{h1h2}
2\|\h_S\|_1  \geq \|\h_S\|_1 + \|\h_{S^c}\|_1 =  \|\h\|_1.
\end{eqnarray}
\\
We now turn to the LASSO estimator\eqref{lasso}. We use the proof technique in \cite{candes2009lowrank} (see also \cite{bickel2009simultaneous}). Since the noise $\w$ satisfies $\|A^T\w\|_\infty \leq \kappa\mu$ for some small $\kappa > 0$, and $\hx$ is a solution to \eqref{lasso}, we have
\begin{eqnarray*}
\frac{1}{2}\|A\hx - \y\|_2^2 + \mu \|\hx\|_1 \leq \frac{1}{2} \|A\x - \y\|_2^2 + \mu\|\x\|_1.
\end{eqnarray*}
Consequently, substituting $\y = A\x + \w$ yields
\begin{eqnarray*}
\mu \|\hx\|_1 &\leq& \frac{1}{2} \|A\x - \y\|_2^2 - \frac{1}{2}\|A\hx - \y\|_2^2 + \mu \|\x\|_1 \nn\\
&=& \frac{1}{2}\|\w\|_2^2 - \frac{1}{2}\|A(\hx-\x) - \w\|_2^2 + \mu\|\x\|_1\nn\\
& = & \frac{1}{2}\|\w\|_2^2 - \frac{1}{2}\|A(\hx-\x)\|_2^2\nn\\
  && \ \ \ + \left<A(\hx-\x), \w\right> - \frac{1}{2}\|\w\|_2^2 + \mu\|\x\|_1\nn\\
&\leq & \left<A(\hx-\x), \w\right> + \mu\|\x\|_1\nn\\
& = & \left<\hx - \x, A^T\w\right> + \mu\|\x\|_1.
\end{eqnarray*}
Using the Cauchy-Swcharz type inequality, we get
\begin{eqnarray*}
\mu \|\hx\|_1 &\leq & \|\hx-\x\|_1\|A^T\w\|_\infty + \mu\|\x\|_1\nn\\
& = & \kappa \mu \|\h\|_1 + \mu\|\x\|_1,
\end{eqnarray*}
which leads to
\begin{eqnarray*}
  \|\hx\|_1 &\leq & \kappa\|\h\|_1 + \|\x\|_1.
\end{eqnarray*}
Therefore, similar to the argument in \eqref{x_min}, we have
\begin{eqnarray*}
 &&\|\x\|_1 \nn\\
 &\geq& \|\hx\|_1 - \kappa\|\h\|_1\nn\\
 & = & \|\x + \h_{S^c} + \h_S\|_1- \kappa\left(\|\h_{S^c} + \h_S\|_1\right) \nonumber\\
  &\geq& \|\x + \h_{S^c} \|_1 - \|\h_S\|_1 - \kappa\left(\|\h_{S^c}\|_1 + \|\h_S\|_1\right) \nonumber\\
  & = & \|\x\|_1 + (1-\kappa)\|\h_{S^c}\|_1 - (1+\kappa)\|\h_S\|_1,
\end{eqnarray*}
where $S = \supp(\x)$.
Consequently, we have
\begin{eqnarray*}
  \|\h_{S}\|_1 &\geq& \frac{1-\kappa}{1+\kappa}\|\h_{S^c}\|_1.
\end{eqnarray*}
Therefore, similar to \eqref{h1h2}, we obtain
\begin{eqnarray}\label{lassoh1h2}
\frac{2}{1-\kappa}\|\h_S\|_1 &\geq& \frac{1+\kappa}{1-\kappa}\|\h_S\|_1 + \frac{1-\kappa}{1-\kappa} \|\h_{S}\|_1\nn\\
&\geq& \frac{1+\kappa}{1-\kappa}\frac{1-\kappa}{1+\kappa}\|\h_{S^c}\|_1 + \frac{1-\kappa}{1-\kappa} \|\h_{S}\|_1\nn\\
& = & \|\h\|_1.
\end{eqnarray}
\end{proof}

\subsection{Proof of Theorem \ref{thm:fix_fs}}\label{app:pf:fix_fs}
\begin{proof}
\begin{enumerate}
\item Since in the optimization problem defining $f_{s,i}(\eta)$, the objective function $\z_i$ is continuous, and the constraint correspondence
\begin{eqnarray}
  C(\eta): &&[0, \infty) \twoheadrightarrow \R^{n}\nonumber\\
  &&\eta \mapsto \left\{\z: \|Q\z\|_\diamond \leq 1, {\|\z\|_{1}} \leq s \eta\right\}
\end{eqnarray}
is compact-valued and continuous (both upper and lower hemicontinuous), according to Berge's Maximum Theorem \cite{Berge1997maximum}, the optimal value function $f_{s,i}(\eta)$ is continuous. The continuity of $f_s(\eta)$ follows from that finite maximization preserves the continuity.

\item  To show the strict increasing property, suppose $0 < \eta_1 < \eta_2$ and the dual variable ${\bs \lambda}_2^*$ achieves $f_{s,i}(\eta_2)$ in \eqref{eqn:fsi_dual}. Then we have
\begin{eqnarray}
  f_{s,i}(\eta_1) &\leq&  s\eta _1 \|\e_i - Q^T{\bs \lambda}_2^*\|_\infty + \|{\bs \lambda}_2\|_\diamond^*\nonumber\\
  &<& s\eta _2 \|\e_i - Q^T{\bs \lambda}_2^*\|_\infty + \|{\bs \lambda}_2\|_\diamond^*\nonumber\\
  &=& f_{s,i}(\eta_2).
\end{eqnarray}
The case for $\eta_1 = 0$ is proved by continuity, and the strict increasing of $f_s(\eta)$ follows immediately.

\item The concavity of $f_{s,i}(\eta)$ follows from the dual representation \eqref{eqn:fsi_dual} and the fact that $f_{s,i}(\eta)$ is the minimization of a function of variables $\eta$ and $\bs \lambda$, and when $\bs\lambda$, the variable to be minimized, is fixed, the function is linear in $\eta$.

\item Next we show that when $\eta > 0$ is sufficiently small $f_s(\eta) \geq s \eta$. Taking $\z = s\eta \e_i$, we have $\|\z\|_{1} = s\eta$ and $\z_i = s\eta > \eta$ (recall $s \in (1, \infty)$). In addition, when $0 < \eta \leq 1/(s\|Q_i\|_\diamond)$, we also have $\|Q\z\|_\diamond \leq 1$. Therefore, for sufficiently small $\eta$, we have $f_{s, i}(\eta) \geq s\eta > \eta$. Clearly, $f_s(\eta) = \max_i f_{s,i}(\eta) \geq s\eta > \eta$ for such $\eta$.

Recall that
\begin{eqnarray}
  \frac{1}{s_*} &=&  \max_i\min_{{\bs \lambda}_i} \|\e_i - Q^T{\bs \lambda}_i\|_\infty.
\end{eqnarray}
Suppose ${\bs \lambda}_i^*$ is the optimal solution for each $\min_{{\bs \lambda}_i}\|\e_i - Q^T{\bs \lambda}_i\|_\infty$. For each $i$, we then have
\begin{eqnarray}
  \frac{1}{s_*} &\geq & \|\e_i - Q^T{\bs \lambda}_i^*\|_\infty,
\end{eqnarray}
which implies
\begin{eqnarray}
  f_{s,i}(\eta) &=& \min_{{\bs \lambda}_i} s\eta  \|\e_i - Q^T{\bs \lambda}_i\|_\infty + \|{\bs \lambda}_i\|_\diamond^* \nonumber\\
  &\leq & s\eta  \|\e_i - Q^T{\bs \lambda}_i^*\|_\infty + \|{\bs \lambda}_i^*\|_\diamond^* \nonumber\\
  &\leq & \frac{s}{s_*} \eta + \|{\bs \lambda}_i^*\|_\diamond^*.
\end{eqnarray}

As a consequence, we obtain
\begin{eqnarray}
f_s(\eta) = \max_i f_{s,i}(\eta) \leq \frac{s}{s_*} \eta + \max_i \|{\bs \lambda}_i^*\|_\diamond^*.
\end{eqnarray}
Pick $\rho \in (s/s_*, 1)$. Then, we have the following when $\eta > \max_i \|{\bs \lambda}_i^*\|_\diamond^*/(\rho - s/s_*)$:
\begin{eqnarray}
  f_{s,i}(\eta) &\leq& \rho \eta, i = 1, \ldots, n, \text{\ and \ } \nn\\
  f_s(\eta) &\leq& \rho \eta.
\end{eqnarray}

\item We first show the existence and uniqueness of the positive fixed points for $f_{s,i}(\eta)$. The properties 1) and 4) imply that $f_{s,i}(\eta)$ has at least one positive fixed point. (Interestingly, 2) and 4) also imply the existence of a positive fixed point, see \cite{tarski1955fixedpoint}.) To prove uniqueness, suppose there are two fixed points $0 < \eta_1^* < \eta_2^*$. Pick $\eta_0$ small enough such that $f_{s,i}(\eta_0) > \eta_0 > 0$ and $\eta_0 < \eta_1^*$. Then $\eta_1^* = \lambda \eta_0 + (1-\lambda)\eta_2^*$ for some $\lambda \in (0, 1)$, which implies that $f_{s,i}(\eta_1^*) \geq \lambda f_{s,i}(\eta_0) + (1-\lambda) f_{s,i}(\eta_2^*) > \lambda \eta_0 + (1-\lambda) \eta_2^* = \eta_1^*$ due to the concavity, contradicting with $\eta_1^* = f_{s,i}(\eta_1^*)$.

    The set of positive fixed point for $f_s(\eta)$, $\{\eta \in (0, \infty): \eta = f_s(\eta) = \max_i f_{s,i}(\eta)\}$, is a subset of $\bigcup_{i=1}^p \{\eta \in (0, \infty): \eta = f_{s,i}(\eta) \} = \{\eta_i^*\}_{i=1}^n$. We argue that
    \begin{eqnarray}
     \eta^* = \max_i \eta_i^*
    \end{eqnarray}
is the unique positive fixed point for $f_s(\eta)$.

We proceed to show that $\eta^*$ is a fixed point of $f_s(\eta)$. Suppose $\eta^*$ is a fixed point of $f_{s,i_0}(\eta)$, then it suffices to show that $f_s(\eta^*) = \max_i f_{s,i}(\eta^*) = f_{s,i_0}(\eta^*)$. If this is not the case, there exists $i_1 \neq i_0$ such that $f_{s,i_1}(\eta^*) > f_{s,i_0}(\eta^*) = \eta^*$. The continuity of $f_{s,i_1}(\eta)$ and the property 4) imply that there exists $\eta > \eta^*$ with $f_{s,i_1}(\eta) = \eta$, contradicting with the definition of $\eta^*$.

To show the uniqueness, suppose $\eta_1^*$ is fixed point of $f_{s,i_1}(\eta)$ satisfying $\eta_1^* < \eta^*$. Then, we must have $f_{s,i_0}(\eta_1^*) > f_{s,i_1}(\eta_1^*)$ because otherwise the continuity implies the existence of another fixed point of $f_{s,i_0}(\eta)$. As a consequence, $f_s(\eta_1^*) > f_{s,i_1}(\eta_1^*) = \eta_1^*$ and $\eta_1^*$ is not a fixed point of $f_s(\eta)$.

\item Next we show $\eta^*  = \gamma^* \df 1/\omega_\diamond(Q,s)$. We first prove $\gamma^* \geq \eta^*$ for the fixed point $\eta^* = f_s(\eta^*)$. Suppose $\z^*$ achieves the optimization problem defining $f_s(\eta^*)$, then we have
\begin{eqnarray}
\eta^* = f_s(\eta^*) = \|\z^*\|_{\infty},  \|Q\z^*\|_\diamond \leq 1, \text{and\ } \|\z^*\|_{1} \leq s\eta^*.
\end{eqnarray}
Since $\|\z^*\|_{1}/\|\z^*\|_{\infty} \leq s\eta^*/\eta^* \leq s$, we have
\begin{eqnarray}
  \gamma^* &\geq& \frac{\|\z^*\|_{\infty}}{\|Q\z^*\|_\diamond} \geq \eta^*.
\end{eqnarray}

If $\eta^* < \gamma^*$, we define $\eta_0 = (\eta^* + \gamma^*)/2$ and
\begin{eqnarray}
\hskip -1cm &&\z^{\mathrm{c}} = \mathrm{argmax}_{\z}{\frac{s\|\z\|_{\infty}}{\|\z\|_{1}}} \text{\ s.t. \ } \|Q\z\|_\diamond \leq 1, \|\z\|_{\infty} \geq \eta_0,\label{eqn:defzc}\\
\hskip -1cm  &&\rho = {\frac{s\|\z^{\mathrm{c}}\|_{\infty}}{\|\z^{\mathrm{c}}\|_{1}}}.\label{eqn:defrho}
\end{eqnarray}
Suppose $\z^{**}$ with $\|Q\z^{**}\|_\diamond = 1$ achieves the optimum of the optimization \eqref{eqn:max_inf_Q_diamond} defining $\gamma^* = 1/\omega_\diamond(Q,s)$. Clearly, $\|\z^{**}\|_{\infty} = \gamma^* > \eta_0$, which implies $\z^{**}$ is a feasible point of the optimization problem \eqref{eqn:defzc}  defining $\z^{\mathrm{c}}$ and $\rho$. As a consequence, we have
\begin{eqnarray}
  \rho \geq {\frac{s\|\z^{**}\|_{\infty}}{\|\z^{**}\|_{1}}} \geq 1.
\end{eqnarray}

\begin{figure}[h!t]
\hskip -0cm
\centering
\includegraphics[width = 0.4\textwidth, trim = 0mm 0mm 0mm 0mm, clip]{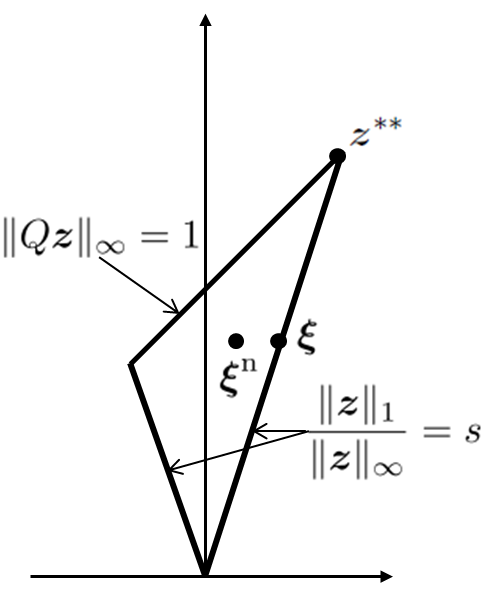}
\caption{Illustration of the proof for $\rho > 1$.}
\label{fig:proof}
\end{figure}%

Actually we will show that $\rho > 1$. If $\|\z^{**}\|_{1} < s\|\z^{**}\|_{\infty}$, we are done. If not (\emph{i.e.}, $\|\z^{**}\|_{1} = s\|\z^{**}\|_{\infty}$), as illustrated in Figure \ref{fig:proof}, we consider $\bs \xi = \frac{\eta_0}{\gamma^*}\z^{**}$, which satisfies
\begin{eqnarray}
  &&\|Q\bs \xi\|_\diamond = \frac{\eta_0}{\gamma^*} < 1,\\
  &&\|\bs \xi\|_{\infty}= \eta_0, \text{\  and \ }\\
  &&\|\bs \xi\|_{1} = s\eta_0.
\end{eqnarray}

To get ${\bs \xi}^{\mathrm{n}}$ as shown in Figure \ref{fig:proof}, pick the component of $\bs \xi$ with the smallest non-zero absolute value, and scale that component by a small positive constant less than $1$. Because $s > 1$, $\bs \xi$ has more than one non-zero components, implying $\|{\bs \xi}^{\mathrm{n}}\|_{\infty}$ will remain the same. If the scaling constant is close enough to $1$, $\|Q{\bs \xi}^{\mathrm{n}}\|_\diamond$ will remain less than 1 due to continuity. But the good news is that $\|{\bs \xi}^{\mathrm{n}}\|_{1}$ decreases, and hence $\rho \geq \frac{s\|{\bs \xi}^{\mathrm{n}}\|_{\infty}}{\|{\bs \xi}^{\mathrm{n}}\|_{1}}$ becomes greater than 1.

Now we proceed to obtain a contradiction that $f_s(\eta^*) > \eta^*$. If $\|\z^{\mathrm{c}}\|_{1} \leq s\cdot \eta^*$, then it is a feasible point of
\begin{eqnarray}\label{eqn:subt0}
  \max_{\z} \|\z\|_{\infty}\text{\ s.t. \ } \|Q\z\|_\diamond \leq 1, \|\z\|_{1} \leq s\cdot \eta^*.
\end{eqnarray}
As a consequence, $f_s(\eta^*) \geq \|\z^{\mathrm{c}}\|_{\infty}\geq \eta_0 > \eta^*$, contradicting with $\eta^*$ is a fixed point and we are done. If this is not the case, \emph{i.e.}, $\|\z^{\mathrm{c}}\|_{1} > s\cdot \eta^*$, we define a new point
\begin{eqnarray}
  \z^{\mathrm{n}} = \tau \z^{\mathrm{c}}
\end{eqnarray}
with
\begin{eqnarray}
  \tau = \frac{s\cdot \eta^*}{\|\z^\mathrm{c}\|_{1}} < 1.
\end{eqnarray}
Note that $\z^{\mathrm{n}}$ is a feasible point of the optimization problem defining $f_s(\eta^*)$ since
\begin{eqnarray}
&&\|Q\z^{\mathrm{n}}\|_\diamond = \tau \|Q\z^{\mathrm{c}}\|_\diamond < 1, \text{\ and \ }\\
&&\|\z^{\mathrm{n}}\|_{ 1} = \tau \|\z^{\mathrm{c}}\|_{1} = s\cdot \eta^*.
\end{eqnarray}
Furthermore, we have
\begin{eqnarray}
  \|\z^{\mathrm{n}}\|_{\infty}= \tau \|\z^{\mathrm{c}}\|_{\infty}= \rho \eta^*.
\end{eqnarray}
As a consequence, we obtain a contradiction
\begin{eqnarray}
  f_s(\eta^*) &\geq& \rho \eta^* > \eta^*.
\end{eqnarray}
\begin{figure}[h!t]
\hskip -0cm
\centering
\includegraphics[width = 0.5\textwidth, trim = 0mm 0mm 0mm 0mm, clip]{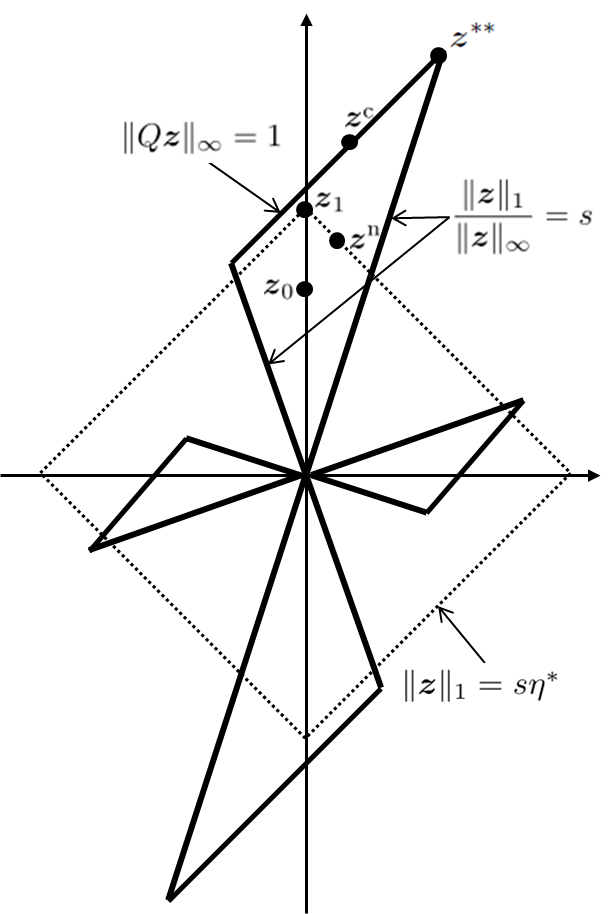}
\caption{Illustration of the proof for $f_s(\eta^*) \geq \rho\eta^*$.}
\label{fig:proof1}
\end{figure}%
Therefore, for the fixed point $\eta^*$, we have $\eta^* = \gamma^* = 1/\omega_\diamond(Q,s)$.

\item This property simply follows from the continuity, the uniqueness, and property 4).
\item We use contradiction to show the existence of $\rho_1(\epsilon)$ in 8). In view of 4), we need only to show the existence of such a $\rho_1(\epsilon)$ that works for $\eta_L \leq \eta \leq (1-\epsilon)\eta^*$ where $\eta_L = \sup\{\eta: f_s(\xi) \geq s\xi, \forall 0 < \xi \leq \eta\}$. Suppose otherwise, we then construct sequences $\{\eta^{(k)}\}_{k=1}^\infty \subset [\eta_L, (1-\epsilon)\eta^*]$ and $\{\rho_1^{(k)}\}_{k=1}^\infty \subset (1, \infty)$ with
\begin{eqnarray}
&& \lim_{k\rightarrow \infty} \rho_1^{(k)} = 1,\nonumber\\
&& f_s(\eta^{(k)}) \leq \rho^{(k)} \eta^{(k)}.
\end{eqnarray}
Due to the compactness of $[\eta_L, (1-\epsilon)\eta^*]$, there must exist a subsequence $\{\eta^{(k_l)}\}_{l=1}^\infty$ of $\{\eta^{(k)}\}$ such that $\lim_{l\rightarrow \infty}\eta^{(k_l)} = \eta_{\mathrm{lim}}$ for some $\eta_{\mathrm{lim}} \in [\eta_L, (1-\epsilon)\eta^*]$. As a consequence of the continuity of $f_s(\eta)$, we have

\begin{eqnarray}
f_s(\eta_{\mathrm{lim}}) =  \lim_{l\rightarrow \infty} f_s(\eta^{(k_l)}) \leq \lim_{l\rightarrow \infty} \rho_1^{(k_l)} \eta^{(k_l)} = \eta_{\mathrm{lim}}.
\end{eqnarray}
Again due to the continuity of $f_s(\eta)$ and the fact that $f_s(\eta) < \eta$ for $\eta < \eta_L$, there exists $\eta_c \in [\eta_L, \eta_{\mathrm{lim}}]$ such that
\begin{eqnarray}
  f_s(\eta_c) &=& \eta_c,
\end{eqnarray}
contradicting with the uniqueness of the fixed point for $f_s(\eta)$. The existence of $\rho_2(\epsilon)$ can be proved in a similar manner.
\end{enumerate}

\end{proof}

\bibliographystyle{siam}
\bibliography{/Users/gongguotang/SugarSync/Papers/Material/BibTex/IEEEabrv,/Users/gongguotang/SugarSync/Papers/Material/BibTex/Gongbib}

\begin{thebibliography}{10}

\bibitem{Baraniuk2007Radar}
{\sc R.~Baraniuk and P.~Steeghs}, {\em Compressive radar imaging}, in IEEE
  Radar Conference, apr 2007, pp.~128--133.

\bibitem{Berge1997maximum}
{\sc C.~Berge}, {\em Topological Spaces}, Dover Publications, Mineola, NY,
  reprinted 1997 in paperback.

\bibitem{bickel2009simultaneous}
{\sc P.~Bickel, Y.~Ritov, and A.~Tsybakov}, {\em Simultaneous analysis of
  {Lasso} and {Dantzig} selector}, Annals of Statistics, 37 (2009),
  pp.~1705--1732.

\bibitem{Bodlaender1990Normmaximization}
{\sc H.~Bodlaender, P.~Gritzmann, V.~Klee, and J.~Leeuwen}, {\em Computational
  complexity of norm-maximization}, Combinatorica, 10 (1990), pp.~203--225.

\bibitem{boyd2004convex}
{\sc S.~Boyd and L.~Vandenberghe}, {\em Convex Optimization}, Cambridge
  University Press, 2004.

\bibitem{romberg2005l1magic}
{\sc E.~Cand\'es and J.~Romberg}, {\em $\ell_1$-magic: Recovery of sparse
  signals via convex programming}, oct 2005.

\bibitem{Candes2008RIP}
{\sc E.~J. Cand\`{e}s}, {\em The restricted isometry property and its
  implications for compressed sensing}, Compte {Rendus} de l'Academie des
  {Sciences}, {Paris}, {Serie} {I}, 346 (2008), pp.~589--592.

\bibitem{candes2009lowrank}
{\sc E.~J. Cand\`es and Y.~Plan}, {\em Tight oracle inequalities for low-rank
  matrix recovery from a minimal number of noisy random measurements}, {IEEE}
  Trans. Inf. Theory, 57 (2011), pp.~2342--2359.

\bibitem{Candes2006Uncertainty}
{\sc E.~J. Cand\`{e}s, J.~Romberg, and T.~Tao}, {\em Robust uncertainty
  principles: {E}xact signal reconstruction from highly incomplete frequency
  information}, {IEEE} Trans. Inf. Theory, 52 (2006), pp.~489--509.

\bibitem{Candes2007Dantzig}
{\sc E.~J. Cand\`{e}s and T.~Tao}, {\em The {Dantzig} selector: Statistical
  estimation when $p$ is much larger than $n$}, Ann. Statist., 35 (2007),
  pp.~2313--2351.

\bibitem{Candes2008IntroCS}
{\sc E.~J. Cand\`{e}s and M.~B. Wakin}, {\em An introduction to compressive
  sampling}, {IEEE} Signal Process. Mag., 25 (2008), pp.~21--30.

\bibitem{Donoho1998Atomic}
{\sc S.~Chen, D.~L. Donoho, and M.~A. Saunders}, {\em Atomic decomposition by
  basis pursuit}, {SIAM} J. Sci. Comp., 20 (1998), pp.~33--61.

\bibitem{Cohen2009NSP}
{\sc A.~Cohen, W.~Dahmen, and R.~DeVore}, {\em Compressed sensing and best
  $k$-term approximation}, J. Amer. Math. Soc., 22 (2009), pp.~211--231.

\bibitem{dAspermont2010Nullspace}
{\sc A.~d'Aspremont and L.~El~Ghaoui}, {\em Testing the nullspace property
  using semidefinite programming}, Mathematical Programming, 127 (2011),
  pp.~123--144.

\bibitem{dAspremont2007sparsePCA}
{\sc A.~d'Aspremont, L.~El Ghaoui, M.~Jordan, and G.~R.~G. Lanckriet}, {\em A
  direct formulation for sparse {PCA} using semidefinite programming}, SIAM
  Review, 49 (2007), pp.~434--448.

\bibitem{donoho2004highdimensional}
{\sc D.~Donoho}, {\em High-dimensional centrally-symmetric polytopes with
  neighborliness proportional to dimension}, technical report, Department of
  Statistics, Stanford University, 2004.

\bibitem{donoho2001uncertainty}
{\sc D.~L. Donoho and X.~Huo}, {\em Uncertainty principles and ideal atomic
  decomposition}, {IEEE} Trans. Inf. Theory, 47 (2001), pp.~2845--2862.

\bibitem{Herman2008Radar}
{\sc M.~Herman and T.~Strohmer}, {\em Compressed sensing radar}, in IEEE Radar
  Conference, may 2008, pp.~1--6.

\bibitem{Herman2008Highradar}
{\sc M.~A. Herman and T.~Strohmer}, {\em High-resolution radar via compressed
  sensing}, {IEEE} Trans. Signal Process., 57 (2009), pp.~2275--2284.

\bibitem{Juditsky2010Verifiable}
{\sc A.~Juditsky and A.~Nemirovski}, {\em On verifiable sufficient conditions
  for sparse signal recovery via $\ell_1$ minimization}, Mathematical
  Programming, 127 (2011), pp.~57--88.

\bibitem{lustig2007sparse}
{\sc M.~Lustig, D.~L Donoho, and J.~M Pauly}, {\em Sparse {MRI}: The
  application of compressed sensing for rapid {MR} imaging}, Magnetic Resonance
  in Medicine, 58 (2007), pp.~1182--1195.

\bibitem{Willsky2005Source}
{\sc D.~Malioutov, M.~Cetin, and A.~S. Willsky}, {\em A sparse signal
  reconstruction perspective for source localization with sensor arrays},
  {IEEE} Trans. Signal Process., 53 (2005), pp.~3010--3022.

\bibitem{mendelson2007subgaussian}
{\sc S.~Mendelson, A.~Pajor, and N.~Tomczak-Jaegermann}, {\em Reconstruction
  and subgaussian operators in asymptotic geometric analysis}, Geometric And
  Functional Analysis,  (2007), pp.~1248--1282.

\bibitem{Sen2011Multi}
{\sc S.~Sen, G.~Tang, and A.~Nehorai}, {\em Multiobjective optimization of
  {OFDM} radar waveform for target detection}, {IEEE} Trans. Signal Process.,
  59 (2011), pp.~639--652.

\bibitem{Baraniuk2007DNA}
{\sc M.~A. Sheikh, S.~Sarvotham, O.~Milenkovic, and R.~G. Baraniuk}, {\em {DNA}
  array decoding from nonlinear measurements by belief propagation}, in Proc.
  IEEE Workshop Statistical Signal Processing (SSP 2007), Madison, WI, aug
  2007, pp.~215--219.

\bibitem{tang2011cmsv}
{\sc G.~Tang and A.~Nehorai}, {\em Performance analysis of sparse recovery
  based on constrained minimal singular values}, to appear in IEEE Trans.
  Signal Processing.

\bibitem{Tang2010Inf}
\leavevmode\vrule height 2pt depth -1.6pt width 23pt, {\em Performance analysis
  for sparse support recovery}, {IEEE} Trans. Inf. Theory, 56 (2010),
  pp.~1383--1399.

\bibitem{tarski1955fixedpoint}
{\sc A.~Tarski}, {\em A lattice-theoretical fix point theorem and its
  applications}, Pacific Journal of Mathematics, 5 (1955), pp.~285--309.

\bibitem{Tibshirani1996Lasso}
{\sc R.~Tibshirani}, {\em Regression shrinkage and selection via {LASSO}}, J.
  Roy. Statist. Soc. Ser. B, 58 (1996), pp.~267--288.

\bibitem{zhang2005overunder}
{\sc Y.~Zhang}, {\em A simple proof for recoverability of
  $\ell_1$-minimization: go over or under?}, tech. report, Rice CAAM
  Department, 2005.

\end{thebibliography}
\end{document}